\newtheorem{lemma}{Lemma}
\newtheorem{proposition}{Proposition}
\newtheorem{theorem}{Theorem}
\theoremstyle{definition}
\newtheorem{question}{Question}
\newtheorem{remark}{Remark}
\newtheorem{example}{Example}
\newcommand{\mcs}{\mathcal S}
\newcommand{\mcsij}{\mathcal{S}_{ij}}	 		% Famille d'ensemble
\DeclareMathOperator{\vcd}{VC-dim}		% vc-dimension	
\DeclareMathOperator{\vccdim}{cVC-dim}
\DeclareMathOperator{\vcsdim}{sVC-dim}
\newcommand{\vccdimt}{clique-VC-dimension\xspace}
\newcommand{\bouquet}{\mathcal B}
\newcommand{\Gbouquet}{G^*}
\newcommand{\Vbouquet}{V^*}
\newcommand{\Ebouquet}{E^*}
\newcommand{\incgraph}{G_{1,2}}
\newcommand{\phiij}{\varphi_{ij}}
\newcommand{\psiij}{\psi_{ij}}
\newcommand{\const}{\binom{d}{2}}
\begin{document}
	
	\thispagestyle{empty}
	
	\centerline{\Large\bf On density of subgraphs of halved cubes}
	
	\medskip
	\centerline{\it In memory of Michel Deza}

	\vspace{10mm}
	\centerline{Victor Chepoi, Arnaud Labourel, and S\'ebastien Ratel}

	\medskip
	\begin{small}
		\medskip
		\centerline{Laboratoire d'Informatique Fondamentale, Aix-Marseille Universit\'e and CNRS,}
		\centerline{Facult\'e des Sciences de Luminy, F-13288 Marseille Cedex 9, France}
		
		\centerline{\texttt{\{victor.chepoi, arnaud.labourel, sebastien.ratel\}@lif.univ-mrs.fr}}
	\end{small}
	
	\bigskip\bigskip\noindent
	{\footnotesize {\bf Abstract.}
		Let $\mcs$ be a family of subsets of a set $X$ of cardinality $m$ and $\vcd(\mathcal S)$ be the
		Vapnik-Chervonenkis dimension of $\mcs$. Haussler, Littlestone, and Warmuth (Inf. Comput., 1994)
		proved that if $G_1(\mcs)=(V,E)$ is the subgraph of the hypercube $Q_m$ induced by $\mcs$ (called the
		1-inclusion graph of $\mcs$), then $\frac{|E|}{|V|}\le \vcd({\mathcal S})$. Haussler (J. Combin. Th. A, 1995)
		presented an elegant proof of this inequality using the shifting operation.
		
		In this note, we adapt the shifting technique to prove that if $\mcs$ is an arbitrary set family and
		$\incgraph(\mcs)=(V,E)$ is the 1,2-inclusion graph of $\mcs$ (i.e., the subgraph of the square $Q^2_m$
		of the hypercube $Q_m$ induced by $\mcs$), then $\frac{|E|}{|V|}\le \binom{d}{2}$, where
		$d:=\vccdim^*(\mcs)$ is the clique-VC-dimension of $\mcs$ (which we introduce in this paper). The
		1,2-inclusion graphs are exactly the subgraphs of halved cubes and comprise subgraphs of Johnson
		graphs as a subclass.
	 }
	
	\section{Introduction} \label{introduction}
	Let $\mcs$ be a family of subsets of a set $X$ of cardinality $m$ and $\vcd(\mathcal S)$ be the
	Vapnik-Chervonenkis dimension of $\mcs$. Haussler, Littlestone, and Warmuth\cite[Lemma 2.4]{HaLiWa}
	proved that if $G_1(\mcs)=(V,E)$ is the subgraph of the hypercube $Q_m$ induced by $\mcs$ (called the {\it
		1-inclusion graph} of $\mcs$), then the following fundamental inequality holds: $\frac{|E|}{|V|} \le
	\vcd({\mathcal S})$. They used this inequality to bound the worst-case expected risk of a prediction model of
	learning of concept classes $\mcs$ based on the bounded degeneracy of their 1-inclusion graphs. Haussler
	\cite{Hau} presented an elegant proof of this inequality using the shifting (push-down) operation. 1-Inclusion
	graphs have many other applications in computational learning theory, for example, in sample compression
	schemes \cite{KuWa}. They are exactly the induced subgraphs of hypercubes and in graph theory they have
	been studied under the name of {\it cubical graphs} \cite{GaGr}.
	Finding a densest $n$-vertex subgraph of the hypercube $Q_m$ (i.e., an $n$-vertex subgraph $G$ of $Q_m$
	with the maximum number of edges) is equivalent to finding an $n$-vertex subgraph $G$ of $Q_m$ with the
	smallest edge-boundary (the number of edges of $Q_m$ running between $V$ and its complement in
	$Q_m$). This is the classical {\it edge-isoperimetric problem} for hypercubes \cite{Bezrukov, Har_book}.
	Harper \cite{Har}	nicely characterized the solutions of this problem: for any $n$, this is the subgraph of the
	hypercube induced by the initial segment of length $n$ of the {\it lexicographic numbering} of the vertices of
	the hypercube. One elegant way of proving this result is using compression \cite{Har_book}.
	
	Generalizing the density inequality $\frac{|E|}{|V|}\le \vcd({\mathcal S})$ of \cite{HaLiWa,Hau} to more general
	classes of graphs is an interesting and important problem. In the current paper, we present a density result
	for 1,2-inclusion graphs $\incgraph(\mcs)$ of arbitrary set families $\mcs$. The 1,2-inclusion graphs are the
	subgraphs of the square $Q_m^2$ of the hypercube $Q_m$ and they are exactly the subgraphs of the halved
	cube $\frac{1}{2}Q_{m+1}$ (Johnson graphs and their subgraphs constitute an important subclass). Since
	1,2-inclusion graphs may contain arbitrary large cliques for constant VC-dimension, we have to adapt the
	definition of classical VC-dimension to capture this phenomenon. For this purpose, we introduce the notion
	of {\it clique-VC-dimension} $\vccdim^*(\mcs)$ of any set family $\mcs$. Here is the main result of
	the paper:
	
	\begin{theorem}\label{density} Let $\mcs$ be an arbitrary set family of $2^X$ with $|X|=m$, let
		$d=\vccdim^*(\mcs)$ be the
		clique-VC-dimension of $\mcs$ and $\incgraph(\mcs)=(V,E)$ be the 1,2-inclusion graph of $\mcs$. Then
		$\frac{|E|}{|V|}\le \binom{d}{2}.$
	\end{theorem}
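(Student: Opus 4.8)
The plan is to imitate Haussler's shifting (push-down) proof of the inequality $|E|/|V|\le\vcd(\mcs)$, now carrying along the two extra kinds of edges that appear in $\incgraph(\mcs)$. For a coordinate $e\in X$, let $\sigma_e$ be the push-down operator that sends $\mcs$ to the family obtained by replacing every $S\in\mcs$ with $e\in S$ by $S\setminus\{e\}$ whenever $S\setminus\{e\}\notin\mcs$, and fixing all other members of $\mcs$. First I would verify the three properties that make the reduction work: (a) $|\sigma_e(\mcs)|=|\mcs|$, so $|V|$ is preserved; (b) $|E(\incgraph(\sigma_e(\mcs)))|\ge|E(\incgraph(\mcs))|$; and (c) $\vccdim^*(\sigma_e(\mcs))\le\vccdim^*(\mcs)$. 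Since the quantity $\sum_{S\in\mcs}|S|$ strictly drops under any $\sigma_e$ that actually changes $\mcs$, iterating the $\sigma_e$'s terminates at a \emph{downward-closed} family $\mcs^*$ (an abstract simplicial complex) with the same $|V|$, with $|E|$ not smaller, and with $\vccdim^*(\mcs^*)\le d$. Hence it suffices to prove $|E(\incgraph(\mcs))|\le\binom{d}{2}|\mcs|$ under the assumption that $\mcs$ is downward closed.

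For downward-closed $\mcs$ I would induct on $m=|X|$. Put $\mcs_0=\{S\in\mcs:m\notin S\}$ and $\mcs_1=\{S\setminus\{m\}:S\in\mcs,\ m\in S\}$; both are downward closed on $X\setminus\{m\}$ and $\mcs_1\subseteq\mcs_0$. The edge set of $\incgraph(\mcs)$ splits into the edges with both ends in $\mcs_0$ (an isomorphic copy of $\incgraph(\mcs_0)$), the edges with both ends in $\{T\cup\{m\}:T\in\mcs_1\}$ (an isomorphic copy of $\incgraph(\mcs_1)$), and the \emph{crossing} edges joining some $A\in\mcs_0$ with $m\notin A$ to some $T\cup\{m\}$: such an edge exists precisely when $|A\triangle T|\le 1$, so these are the $|\mcs_1|$ edges with $A=T$ together with the distance-$2$ crossing edges coming from pairs with $|A\triangle T|=1$. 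Since $\vccdim^*(\mcs_0)\le d$, the inductive hypothesis gives $|E(\incgraph(\mcs_0))|\le\binom{d}{2}|\mcs_0|$, and the whole proof reduces to bounding the remaining contribution — the edges of the copy of $\incgraph(\mcs_1)$ together with all crossing edges — by $\binom{d}{2}|\mcs_1|$. If $\mcs_1=\emptyset$ there is nothing to do; otherwise one would invoke the analogue of Haussler's key fact, namely that $\vccdim^*(\mcs_1)\le d-1$ (this should be part of the basic theory of the clique-VC-dimension developed before the proof: a clique-shattered configuration of $\mcs_1$ lifts through the coordinate $m$ to one of $\mcs$, using $\mcs_1\subseteq\mcs_0$). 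Induction then bounds the internal edges of $\incgraph(\mcs_1)$ by $\binom{d-1}{2}|\mcs_1|$, leaving a budget of $\bigl(\binom{d}{2}-\binom{d-1}{2}\bigr)|\mcs_1|=(d-1)|\mcs_1|$ for the crossing edges.

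The hard part will be to fit the crossing edges into that budget. Estimating them one vertex at a time is far too wasteful: a single $T\cup\{m\}$ can be joined to $T$ itself, to every $T\setminus\{x\}$ with $x\in T$, and to every $T\cup\{x\}\in\mcs_0$ with $x\notin T$ (the last collection is bounded by $d-1$ because $\{T\}\cup\{T\cup\{x\}\in\mcs_0\}$ is a clique of $\incgraph(\mcs)$), which is on the order of $2d$ edges rather than $d-1$. So the crossing edges have to be globally re-distributed — matched against edges of $\incgraph(\mcs_0)$ or $\incgraph(\mcs_1)$ not already used, or absorbed into the slack that low-dimensional faces leave in the inductive estimate, in the spirit of Haussler's refined count for $1$-inclusion graphs — and exhibiting such a charging that exactly meets the constant $\binom{d}{2}$ is, I expect, the technical heart of the argument. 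A secondary obstacle is establishing the shifting properties (b) and (c): unlike the $1$-inclusion graph, $\incgraph(\mcs)$ contains distance-$2$ edges of two shapes, $\{S,S\cup\{i,j\}\}$ and $\{S\cup\{i\},S\cup\{j\}\}$, and one must follow the effect of $\sigma_e$ on each shape to see both that no edge is lost on balance and that $\sigma_e$ creates no new clique-shattered configuration.
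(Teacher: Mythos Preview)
Your plan diverges from the paper in two essential ways, and both differences are precisely where your outline leaves genuine gaps.

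First, the paper does \emph{not} use single-coordinate shifts $\sigma_e$. It lifts $\mcs$ to an even family $\mcs^+$, twists by a set $A\in\mcs^+$ realising $\vccdim^*(\mcs)=\vccdim(\mcs^+\triangle A)$, and then applies \emph{double shiftings} $\varphi_{ij}$ (removing a pair $\{e_i,e_j\}$) inside the class of pointed even families. Your property~(c), $\vccdim^*(\sigma_e(\mcs))\le\vccdim^*(\mcs)$, is neither proved in the paper nor obviously true: by definition $\vccdim^*$ is computed on the even lifting $\mcs^+$, and a single-element shift changes parities, so $(\sigma_e(\mcs))^+$ is not related to $\mcs^+$ by any shift or twist. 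The paper's monotonicity lemma (their Lemma~\ref{lem_push-down_baisse_vccdim2}) is stated only for $\vccdim$ of a \emph{pointed even} family under a d-shift; it does not transfer to your setting, and the minimum-over-twistings in the definition of $\vccdim^*$ makes a direct adaptation delicate. Likewise your assertion that $\vccdim^*(\mcs_1)\le d-1$ ``should be part of the basic theory'' has no counterpart in the paper and would require an argument tailored to the lifting/twisting definition.

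Second, even granting the reduction to a downward-closed $\mcs^*$, the paper does not run an induction on $m$ with a charging scheme for crossing edges; instead it observes that a complete sequence of d-shiftings yields a \emph{bouquet of halved cubes} and proves directly that such a bouquet has degeneracy at most $\binom{d}{2}$ (their Proposition~\ref{bouquet-density}): a set $S$ of maximum size has $|S|=d-k\le d$, hence $\binom{d-k}{2}$ vertical neighbours below it and at most $(d-k)k$ horizontal neighbours, and $\binom{d-k}{2}+(d-k)k=\binom{d}{2}-\binom{k}{2}\le\binom{d}{2}$. This sidesteps entirely the ``global re-distribution'' you identify as the technical heart; your own crossing-edge estimate of roughly $2d$ per vertex against a budget of $d-1$ shows that the induction-on-$m$ route would need a nontrivial new idea that you have not supplied. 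In short, the two unfinished pieces of your plan --- monotonicity of $\vccdim^*$ under $\sigma_e$ and the crossing-edge charging --- are exactly the obstacles the paper avoids by switching to double shifts and to a degeneracy argument on bouquets of halved cubes.
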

	
	\section{Related work} \label{sect_original_method}
	
	\subsection{Haussler's proof of the inequality $\frac{|E|}{|V|}\le \vcd({\mathcal S})$}
	We briefly review the notion of VC-dimension and the shifting method of \cite{Hau} of proving the inequality
	$\frac{|E|}{|V|}\le \vcd({\mathcal S})$ (the original proof of \cite{HaLiWa} was by induction on the number of
	sets). In the same vein, see Harper's proof \cite[Chapter 3]{Har_book} of the isoperimetric inequality via
	compression. We will use the shifting method in the proof of Theorem \ref{density}.
	
	Let $\mcs$ be a family of subsets of a set $X=\{ e_1,\ldots,e_m\}$; $\mcs$ can be viewed as a subset of
	vertices of the $m$-dimensional hypercube $Q_m$. Denote by $G_1(\mcs)$ the subgraph of $Q_m$ induced
	by the vertices of $Q_m$ corresponding to the sets of $\mcs$; $G_1(\mcs)$ is called the \emph{1-inclusion 	
		graph} of $\mcs$ \cite{Hau,HaLiWa}. Vice-versa, for any subgraph $G$ of $Q_m$ there exists a family of
	subsets $\mcs$ of $2^X$ such that $G$ is the 1-inclusion graph of $\mcs$. A subset $Y$ of $X$ is
	\emph{shattered} by $\mcs$ if for all $Y'\subseteq Y$ there exists $S\in\mcs$ such that $S\cap Y=Y'$. The
	\emph{Vapnik-Chervonenkis's dimension} \cite{VaCh} $\vcd(\mcs)$ of $\mcs$ is the cardinality of the largest
	subset of $X$ shattered by $\mcs$.
	
	\begin{theorem}[\negthickspace\cite{Hau,HaLiWa}] \label{subgraphs_hypercubes}
		If $G:=G_1(\mcs)=(V,E)$ is the 1-inclusion graph of a set family $\mcs\subseteq 2^X$ with VC-dimension
		$\vcd(\mcs)=d$, then $\frac{|E|}{|V|}\le d.$
	\end{theorem}
	
	For a set family ${\mathcal S}\subseteq 2^X,$ the {\it shifting (push down or stabilization) operation}
	$\varphi_e$ with respect to an element $e\in X$ replaces every set $S$ of $\mcs$ such that
	$S\setminus\{ e\}\notin {\mathcal S}$ by the set $S\setminus\{ e\}$. Denote by $\varphi_e({\mathcal S})$ the
	resulting set family and by
	$G'=G_1(\varphi_e(\mcs))=(V',E')$ the 1-inclusion graph of $\varphi_e({\mathcal S})$. Haussler \cite{Hau} 
    proved that the shifting map $\varphi_e$ has the following properties:
	
	\begin{itemize}
		\item[(1)] $\varphi_e$ is bijective on the vertex-sets: $|V|=|V'|$,
		\item[(2)] $\varphi_e$ is increasing the number of edges: $|E|\le |E'|$,
		\item[(3)] $\varphi_e$ is decreasing the VC-dimension: $\vcd(\mcs)\ge \vcd(\varphi_e({\mathcal S}))$.
	\end{itemize}
	Harper \cite[p.28]{Har_book} called {\it Steiner operations} the set-maps $\varphi: 2^X\rightarrow 2^X$
	satisfying (1), (2), and the
	following condition:
	\begin{itemize}
		\item[(4)] $S\subseteq T$ implies $\varphi (S)\subseteq \varphi (T)$.
	\end{itemize}
	He proved that the compression operation defined in \cite[Subsection
	3.3]{Har_book} is a Steiner operation. Note that $\varphi_e$ satisfies (4) (but is defined only on $\mcs$).
	
	After a finite sequence of shiftings, any set family $\mcs$ can be transformed into a set family $\mcs^*$,
	such that $\varphi_e(\mcs^*)=\mcs^*$ holds for any $e\in X$. 
	The resulting set family $\mcs^*$, a {\it complete shifting} of $\mathcal S$, is {\it downward
		closed} (i.e., is a {\it simplicial complex}). 
	Consequently, the 1-inclusion graph $G_1(\mcs^*)$ of $\mcs^*$ is a {\it bouquet of cubes}, i.e., a
	union of subcubes of $Q_m$ with a common origin $\varnothing$.
	Let $G^*=G_1(\mcs^*)=(V^*,E^*)$ and $d^*=\vcd(\mcs^*)$. Since all shiftings satisfy the conditions (1)-(3),
	we conclude that $|V^*|=|V|$, $|E^*|\ge |E|$, and $d^*\le d$.
	Therefore, to prove the inequality $\frac{|E|}{|V|}\le d$ it suffices to show that $\frac{|E^*|}{|V^*|}\le d^*$.
	Haussler deduced it from Sauer's lemma \cite{Sauer}, however it is easy to prove this inequality directly, by
	bounding the degeneracy of $G^*$. Indeed, let $v_0$ be the vertex of $G^*$ corresponding to the origin
	$\varnothing$ and let $v$ be a furthest from $v_0$ vertex of $G^*$. Then $v_0$ and $v$ span a maximal
	cube of $G^*$ (of dimension $\le d^*$) and $v$ belongs only to this maximal cube of $G^*$. Therefore, if we
	remove $v$ from $G^*$, we will also remove at most $d^*$ edges of $G^*$ and the resulting graph will be
	again a bouquet of cubes $G^-=(V^-,E^-)$ with one less vertex and dimension $\le d^*$. Therefore, we can
	apply the induction hypothesis to this bouquet $G^-$ and deduce that $|E^-|\le |V^-|d^*$. Consequently,
	$|E^*|\le d^*+ |E^-|\le d^*+(|V^*|-1)d^*=|V^*|d^*$.
	
	To extend Haussler's proof to subgraphs of halved cubes (and, equivalently, to subgraphs of squares of
	cubes), we need to appropriately define the shifting operation and the notion of VC-dimension, that satisfy
	the
	conditions (1)-(3). Additionally, the degeneracy of the 1,2-inclusion graph of the final shifted family must be
	bounded by a function of the VC-dimension. We will use the shifting operation with respect to pairs of
	elements (and not to
	single elements) and the notion of clique-VC-dimension instead of VC-dimension.
	
	\subsection{Other results}
	The inequality of Haussler et al. \cite{HaLiWa} as well as the notion of VC-dimension and Sauer lemma have
	been subsequently extended to subgraphs of Hamming graphs (i.e., from binary alphabets to arbitrary
	alphabets); see \cite{HaLo,Pollard_book,Natarajan,RuBaRu}. Cesa-Bianchi and Haussler \cite{CBHa} presented
	a graph-theoretical generalization of the Sauer lemma for the $m$-fold $F^m=F\times \cdots\times F$
	Cartesian products of arbitrary undirected graphs $F$. In \cite{ChLaRa_products}, we defined a notion of
	VC-dimension for subgraphs of Cartesian products of arbitrary connected graphs (hypercubes are Cartesian
	products of $K_2$) and we established a density result $\frac{|E|}{|V|}\le \vcd(G)\cdot \alpha(H)$ for 	
	subgraphs $G$ of Cartesian products of graphs not containing a fixed graph $H$ as a minor ($\alpha(H)$ is a
	constant such that any graph not containing $H$ as a minor has density at most $\alpha(H)$; it is well known
	\cite{Die} that if $r:=|V(H)|$, then $\alpha(H)\le cr\sqrt{\log r}$ for a universal constant $c$).
	
	For edge- and vertex-isoperimetric problems in Johnson graphs (which are still open problems), some authors
	\cite{AlCa, DiSeVe}
	used a natural {\it pushing
		to the left} (or \emph{switching}, or \emph{shifting}) operation. Let $\mcs$ consists only of sets of size $r$.
	Given an arbitrary total order	$e_1,\ldots, e_m$ of the elements of $X$ and two elements $e_i<e_j$, in the
	pushing to
	the left of $\mcs$ with respect to the pair $e_i,e_j$ each set $S$ of $\mcs$ containing $e_j$ and not
	containing $e_i$ is replaced by the set $S\setminus \{ e_j\} \cup \{ e_i\}$ if $S\setminus \{ e_j\} \cup \{ e_i\}\notin
	\mcs$. This operation preserves the size of $\mcs$, the cardinality $r$ of the sets and do not decrease the
	number
	of edges, but the degeneracy of the final graph is not easy to bound.
	
	Bousquet and Thomass\'e \cite{BouTh} defined the notions of 2-shattering and 2VC-dimension and 	
	established the Erd\"os-P\'osa property for the families of balls of fixed radius in graphs with bounded
	2VC-dimension. These notions have some similarity with our concepts of c-shattering and
	clique-VC-dimension because they concern shattering not of all subsets but only of a certain pattern of
	subsets (of all pairs).
	Recall from \cite{BouTh} that a set family $\mcs$ {\it 2-shatters} a set $Y$ if for any $2$-set $\{ e_i,e_j\}$ of
	$Y$ there exists $S\in \mcs$ such that $Y\cap S=\{ e_i,e_j\}$; the {\it 2VC-dimension} of $\mcs$ is the
	maximum size of a 2-shattered set.
	
	Halved cubes and Johnson graphs host several important classes of graphs occurring from metric graph
	theory \cite{BaCh_survey}: basis graphs of matroids are isometric subgraphs of Johnson graphs \cite{Mau}
	and basis graphs of even $\Delta$-matroids are isometric subgraphs of halved cubes \cite{Ch_delta}. More
	general classes are the graphs isometrically embeddable into halved cubes and Johnson graphs. Similarly
	to Djokovi\'{c}'s characterization of isometric subgraphs of hypercubes \cite{Dj}, isometric subgraphs of
	Johnson graphs have been characterized in \cite{Ch_Johnson} (the problem of characterizing isometric
	subgraphs of halved cubes has been raised in \cite{DeLa} and is still open).
	Shpectorov \cite{Shp} proved that the graphs admitting an isometric embedding into an $\ell_1$-space are
	exactly the graphs which admit a scale embedding into a hypercube and he proved that such graphs are
	exactly the graphs which are isometric subgraphs of Cartesian products of octahedra and of isometric
	subgraphs of halved cubes. For a presentation of most of these results, see the book by Deza and Laurent
	\cite{DeLa}.
	
	\section{Preliminaries} \label{preliminaries}
	
	\subsection{Degeneracy}
	All graphs $G=(V,E)$ occurring in this note are finite, undirected, and simple. The \emph{degeneracy} of $G$
	is the minimal $k$ such that there exists a total order $v_1,\ldots,v_n$ of vertices of $G$ such that each 	
	vertex $v_i$ has degree at most $k$ in the subgraph of $G$ induced by $v_i, v_{i+1},\ldots, v_n$.
	It is well known and it can be easily shown that the degeneracy of every graph $G=(V,E)$ upper bounds the
	ratio $\frac{|E|}{|V|}$. %By $\Delta(G)$ we will denote the maximum degree of $G$.
	
	\subsection{Squares of hypercubes, halved cubes, and Johnson graphs}
	The \emph{$m$-dimensional hypercube} $Q_m$ is the graph having all $2^m$ subsets of a set $X=\{ 	
	e_1,\ldots,e_m\}$ as the vertex-set and two sets $A,B$ are adjacent in $Q_m$ iff $|A\triangle B|=1$. The {\it
		halved
		cube} $\frac{1}{2}Q_m$ \cite{BrCoNe,DeLa} has the subsets of $X$ of even cardinality as vertices and two
	such vertices $A,B$ are adjacent in $\frac{1}{2}Q_m$ iff $|A\triangle B|=2$ (one can also define halved cubes
	for
	subsets of odd size). Equivalently, the halved cube $\frac{1}{2}Q_{m}$ is the {\it square} $Q^2_{m-1}$ of the
	hypercube $Q_{m-1}$, i.e., the graph formed by connecting pairs of vertices of $Q_{m-1}$ whose distance is
	at most two in $Q_{m-1}$. For an integer $r>0$, the {\it Johnson graph} $J(r,m)$ \cite{BrCoNe,DeLa} has the
	subsets of $X$ of size $r$ as vertices and two such vertices $A,B$ are adjacent in $J(r,m)$ iff $|A\triangle
	B|=2$. All Johnson graphs $J(r,m)$ are (isometric) subgraphs of the corresponding halved cube
	$\frac{1}{2}Q_m$. Notice also that the halved cube $\frac{1}{2}Q_m$ and the Johnson graph $J(r,m)$ are
	scale 2 embedded in the hypercube $Q_m$.
	
	Let $\mcs$ be a family of subsets of a set $X=\{ e_1,\ldots,e_m\}$. The {\it 1,2-inclusion graph}
	$\incgraph(\mcs)$ of $\mcs$ is the graph having $\mcs$ as the vertex-set and in which two vertices $A$ and
	$B$ are adjacent
	iff $1\le |A\Delta B|\le 2$, i.e., $\incgraph(\mcs)$ is the subgraph of the square $Q_m^2$ of $Q_m$ induced by
	$\mcs$. The graph $\incgraph(\mcs)$ comprises all edges of the 1-inclusion graph $G_1(\mcs)$ of $\mcs$ 
	and of the subgraphs of the halved cubes induced by even and odd sets of $\mcs$.
	The latter edges of $\incgraph(\mcs)$ are of two types: {\it vertical edges} $SS'$ arise from sets $S,S'$ such
	that $|S|=|S'|+2$ or $|S'|=|S|+2$ and {\it horizontal edges} $SS'$ arise from sets $S,S'$ such that $|S|=|S'|$.
	
	If all sets of $\mcs$ have even cardinality, then we will call $\mcs$  an {\it even set family}; in this case,
	the 1,2-inclusion graph $\incgraph(\mcs)$ coincides with the
	subgraph of the halved cube $\frac{1}{2}Q_m$ induced by $\mcs$.
	Since $Q^2_m$ is isomorphic to
	$\frac{1}{2}Q_{m+1}$, any 1,2-inclusion graph is an induced subgraph of a halved cube. More precisely,
	any set family $\mcs$ of  $X$ can be lifted to an even set family $\mcs^+$ of
	$X\cup \{ e_{m+1}\}$ in  such a way that the 1,2-inclusion graphs of $\mcs$ and $\mcs^{+}$ are
	isomorphic: $\mcs^+$ consists of all sets of even size of $\mcs$ and of all sets of odd size of $\mcs$ to
	which the element
	$e_{m+1}$ was added. The proof of the following lemma is straightforward:
	
	\begin{lemma} \label{odd-even} For any set family $\mcs$, the lifted family $\mcs^+$  is an even set family
		and the
		1,2-inclusion graphs $\incgraph(\mcs)$ and $\incgraph(\mcs^+)$ are isomorphic.
	\end{lemma}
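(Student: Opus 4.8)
The plan is to exhibit the natural bijection between $\mcs$ and $\mcs^+$ induced by the lifting and to check that it both preserves and reflects adjacency in the $1,2$-inclusion graph. First, observe that $\mcs^+$ is an even set family by construction: a set of $\mcs$ of even cardinality is kept unchanged, while a set of odd cardinality has the new element $e_{m+1}$ added to it, so that its cardinality becomes even. Next, define $\psi\colon\mcs\to\mcs^+$ by $\psi(S)=S$ if $|S|$ is even and $\psi(S)=S\cup\{e_{m+1}\}$ if $|S|$ is odd. Since $e_{m+1}\notin X$, the set $S$ is recovered from $\psi(S)$ by deleting $e_{m+1}$ whenever it is present, so $\psi$ is a bijection from the vertex set of $\incgraph(\mcs)$ onto the vertex set of $\incgraph(\mcs^+)$.

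It then remains to verify that, for distinct $S,T\in\mcs$, one has $1\le|S\triangle T|\le 2$ if and only if $1\le|\psi(S)\triangle\psi(T)|\le 2$. I would split into cases according to the parities of $|S|$ and $|T|$. If $|S|$ and $|T|$ have the same parity, then either both sets are left unchanged or both have $e_{m+1}$ added, and in either case the new element cancels in the symmetric difference, giving $\psi(S)\triangle\psi(T)=S\triangle T$, so the equivalence is immediate. If $|S|$ and $|T|$ have different parities, say $|S|$ even and $|T|$ odd, then $\psi(S)\triangle\psi(T)=(S\triangle T)\cup\{e_{m+1}\}$ with $e_{m+1}\notin S\triangle T$, hence $|\psi(S)\triangle\psi(T)|=|S\triangle T|+1$. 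In this last case $|S\triangle T|$ is odd, so $1\le|S\triangle T|\le 2$ holds exactly when $|S\triangle T|=1$, which holds exactly when $|\psi(S)\triangle\psi(T)|=2$; thus the equivalence holds here too. Combining the cases shows that $\psi$ is an isomorphism between $\incgraph(\mcs)$ and $\incgraph(\mcs^+)$.

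There is no real obstacle here: the argument is short parity bookkeeping. The only point requiring a little care is the mixed-parity case, where the symmetric difference of the lifted sets differs from that of the original sets by exactly the single element $e_{m+1}$, so that an edge of $\incgraph(\mcs)$ coming from sets at Hamming distance $1$ turns into an edge coming from sets at Hamming distance $2$ in $\incgraph(\mcs^+)$, and conversely; once this is observed the verification is mechanical.
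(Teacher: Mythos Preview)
Your proof is correct and is precisely the straightforward parity bookkeeping the paper has in mind; the paper in fact omits the proof entirely, declaring it straightforward, and your argument supplies exactly the natural bijection and case analysis one would expect.
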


	\subsection{Pointed set families and pointed cliques}
	
	We will call a set family $\mcs$ a {\it pointed set family} if $\varnothing\in \mcs$. Any set family $\mcs$ can be
	transformed into a pointed set family by the operation of twisting. For a set $A\in \mcs$, let $\mcs\triangle
	A:=\{ S\triangle A: S\in \mcs\}$ and say that $\mcs\triangle A$ is obtained from $\mcs$ by applying a {\it
		twisting} with respect to $A$. Note that a twisting is a bijection between $\mcs$ and $\mcs\triangle A$
	mapping the set $A$ to $\varnothing$ (and therefore $\mcs\triangle A$ is a pointed set family). Notice that
	any twisting of an even set family $\mcs$ is an even set family. As before, let
	$G_1(\mcs)$ denote the 1-inclusion graph of $\mcs$. The following properties of twisting are well-known and
	easy to prove:
	
	\begin{lemma} \label{twisting0}
		For any $\mcs\subseteq 2^X$ and any $A\subseteq X$, $G_1(\mcs\triangle A)\backsimeq G_1 (\mcs)$ 	
		and $\vcd(\mcs\triangle A)=\vcd(\mcs)$.
	\end{lemma}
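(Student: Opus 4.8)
The plan is to observe that twisting by $A$ is just the restriction to $\mcs$ of the automorphism $\tau_A\colon 2^X\to 2^X$, $\tau_A(S)=S\triangle A$, of the hypercube $Q_m$. First I would record that $\tau_A$ is an involution, hence a bijection, and that it preserves the Hamming distance: for all $S,T\subseteq X$ one has $\tau_A(S)\triangle\tau_A(T)=(S\triangle A)\triangle(T\triangle A)=S\triangle T$. In particular $|S\triangle T|=1$ if and only if $|\tau_A(S)\triangle\tau_A(T)|=1$, so restricting $\tau_A$ to the vertex set $\mcs$ yields a graph isomorphism $G_1(\mcs)\backsimeq G_1(\mcs\triangle A)$, which proves the first assertion.

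For the VC-dimension, since $\tau_A$ is an involution we have $(\mcs\triangle A)\triangle A=\mcs$, so it suffices to prove $\vcd(\mcs)\le\vcd(\mcs\triangle A)$ and then invoke symmetry. Concretely, I would show that every $Y\subseteq X$ shattered by $\mcs$ is also shattered by $\mcs\triangle A$. The key elementary identity is $(S\triangle A)\cap Y=(S\cap Y)\triangle(A\cap Y)$, valid for every $S\subseteq X$. Given $Y'\subseteq Y$, apply shattering of $Y$ by $\mcs$ to the subset $Y'\triangle(A\cap Y)\subseteq Y$: there is $S\in\mcs$ with $S\cap Y=Y'\triangle(A\cap Y)$, whence $(S\triangle A)\cap Y=(Y'\triangle(A\cap Y))\triangle(A\cap Y)=Y'$, so that $S\triangle A\in\mcs\triangle A$ realizes the trace $Y'$ on $Y$. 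Hence $Y$ is shattered by $\mcs\triangle A$, giving $\vcd(\mcs)\le\vcd(\mcs\triangle A)$; the reverse inequality follows by twisting again by $A$.

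Both steps amount to routine Boolean bookkeeping with symmetric differences; the only point needing a little care is tracking traces on $Y$ under the twist, i.e. the identity $(S\triangle A)\cap Y=(S\cap Y)\triangle(A\cap Y)$, which makes the shattering argument go through cleanly. I do not anticipate any genuine obstacle.
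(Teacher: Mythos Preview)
Your proof is correct and is the standard argument. The paper itself does not give a proof of this lemma at all: it merely states that these properties of twisting are ``well-known and easy to prove'' and then moves on. So there is no paper proof to compare against beyond the implicit claim that the result is routine, and what you have written is exactly the routine verification one would expect---$\tau_A$ is a hypercube automorphism (giving the graph isomorphism), and the identity $(S\triangle A)\cap Y=(S\cap Y)\triangle(A\cap Y)$ shows that shattering of $Y$ is preserved under twisting.
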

	
	Analogously to the proof of the first assertion of Lemma \ref{twisting0}, one can easily show that:
	
	\begin{lemma} \label{twisting}
		For any set family $\mcs\subseteq 2^X$ and any $A\subseteq X$, $\incgraph(\mcs\triangle A)\backsimeq
		\incgraph(\mcs)$.
	\end{lemma}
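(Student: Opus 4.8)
The plan is to use the twisting map itself as the required isomorphism. First I would observe that the assignment $S\mapsto S\triangle A$ is a bijection from $\mcs$ onto $\mcs\triangle A$: it is surjective by the very definition of $\mcs\triangle A$, and it is injective (indeed involutive) since $(S\triangle A)\triangle A=S$. So it remains only to check that this bijection preserves adjacency and non-adjacency in the respective 1,2-inclusion graphs.

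The heart of the matter is the elementary identity $(S\triangle A)\triangle(S'\triangle A)=S\triangle S'$, valid for all $S,S'\subseteq X$, which follows at once from the associativity and commutativity of $\triangle$ together with $A\triangle A=\varnothing$. Taking cardinalities yields $|(S\triangle A)\triangle(S'\triangle A)|=|S\triangle S'|$; that is, twisting by $A$ preserves the Hamming distance on $2^X$, hence is an automorphism of the hypercube $Q_m$ and therefore also of its square $Q^2_m$. Restricting this automorphism of $Q^2_m$ to the vertex set $\mcs$ gives precisely the bijection above, now seen to be an isomorphism of induced subgraphs.

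To make this explicit in the language of the definitions: $S$ and $S'$ are adjacent in $\incgraph(\mcs)$ iff $1\le|S\triangle S'|\le 2$, and likewise $S\triangle A$ and $S'\triangle A$ are adjacent in $\incgraph(\mcs\triangle A)$ iff $1\le|(S\triangle A)\triangle(S'\triangle A)|\le 2$; by the displayed identity these two conditions are identical, so edges map to edges and non-edges to non-edges, giving $\incgraph(\mcs)\backsimeq\incgraph(\mcs\triangle A)$. I do not anticipate any genuine obstacle here, since the statement is merely the fact that translation by a fixed vector is an isometry of the Hamming cube; the only point requiring a moment's care is to distinguish ``adjacency in $\incgraph$'' (the condition $1\le|{\cdot}\triangle{\cdot}|\le 2$, which lumps together horizontal edges, vertical edges, and ordinary $1$-inclusion edges) from equality of symmetric differences, and since the isomorphism criterion refers only to the \emph{size} of the symmetric difference this causes no trouble. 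The same computation, restricted to the case $|S\triangle S'|=1$, reproves the first assertion of Lemma~\ref{twisting0}.
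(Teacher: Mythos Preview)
Your proposal is correct and follows essentially the same approach the paper indicates: the paper does not give an explicit proof of this lemma but merely says it is analogous to the first assertion of Lemma~\ref{twisting0}, and the key identity $|(S\triangle A)\triangle(S'\triangle A)|=|S\triangle S'|$ that you use is precisely the one the paper invokes in the proof of Lemma~\ref{canonical_clique}. Your write-up is in fact more complete than what the paper provides.
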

	
	We will say that a clique $\mathcal C$ of $\frac{1}{2}Q_m$  is a {\it pointed clique} if
	$\mathcal C$ is a pointed set family.
	
	\begin{lemma} \label{canonical_clique}
		By a twisting, any clique $\mathcal C$ of $\frac{1}{2}Q_m$ can be transformed into a pointed clique.
	\end{lemma}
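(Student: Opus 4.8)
The plan is to twist $\mathcal C$ with respect to one of its own members. Concretely, I would pick an arbitrary $A\in\mathcal C$ and pass to the twisted family $\mathcal C\triangle A=\{S\triangle A:S\in\mathcal C\}$. Since $A\triangle A=\varnothing$, the empty set belongs to $\mathcal C\triangle A$, so $\mathcal C\triangle A$ is automatically a pointed set family; the only thing left to check is that it is still a clique of $\frac12 Q_m$.

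For that I would verify two points. First, $\mathcal C\triangle A$ consists of even sets: every vertex of $\frac12 Q_m$ has even cardinality, so $|A|$ is even, and from $|S\triangle A|\equiv|S|+|A|\pmod 2$ one sees that the map $S\mapsto S\triangle A$ sends sets of even cardinality to sets of even cardinality. Second, adjacency is preserved: for distinct $S,T\in\mathcal C$ one has $(S\triangle A)\triangle(T\triangle A)=S\triangle T$, whose cardinality is $2$ because $S$ and $T$ are adjacent in $\frac12 Q_m$; hence $S\triangle A$ and $T\triangle A$ are adjacent in $\frac12 Q_m$ as well. Together these two facts say that $\mathcal C\triangle A$ is a complete subgraph of $\frac12 Q_m$ passing through $\varnothing$, i.e.\ a pointed clique. (The second point is also subsumed by Lemma~\ref{twisting}, which already gives $\incgraph(\mathcal C\triangle A)\cong\incgraph(\mathcal C)$.)

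There is essentially no obstacle here: the argument is a one-line application of twisting, in the same spirit as Lemmas~\ref{twisting0}--\ref{twisting}. The only place that calls for a moment of care is the parity bookkeeping — one must use that $A$ itself lies in $\mathcal C$, and hence has even cardinality, in order to conclude that twisting keeps all sets inside the even halved cube rather than moving them into the odd one.
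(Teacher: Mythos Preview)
Your proof is correct and follows essentially the same route as the paper: twist $\mathcal C$ by one of its own members $A$, use $A\triangle A=\varnothing$ to get pointedness, and use $(S\triangle A)\triangle(T\triangle A)=S\triangle T$ to preserve adjacency. The paper happens to choose $A$ of maximal size, but this plays no role in the argument, and your additional parity check is a nice point the paper leaves implicit.
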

	
	\begin{proof}
		Let $\mathcal C$ be a clique of $\frac{1}{2}Q_m$. Let $A$ be a set of maximal size which is a vertex of
		$\mathcal
		C$. Then the twisting of ${\mathcal C}$ with respect to $A$ maps $\mathcal C$ into a pointed clique
		${\mathcal C}\triangle A$ of $\frac{1}{2}Q_m$: indeed, if $C',C''$ are two vertices of $\mathcal C$, then 	
		$|(C'\triangle A)\triangle(C''\triangle A)|=|C'\triangle C''|=2$.
		Since $A\Delta A=\varnothing$, ${\mathcal C}\triangle A$ is a pointed clique (for an illustration, see Fig.
		\ref{fig_twist}).
	\end{proof}
	
	\begin{figure}
		\centering
		\includegraphics[width=0.4\linewidth]{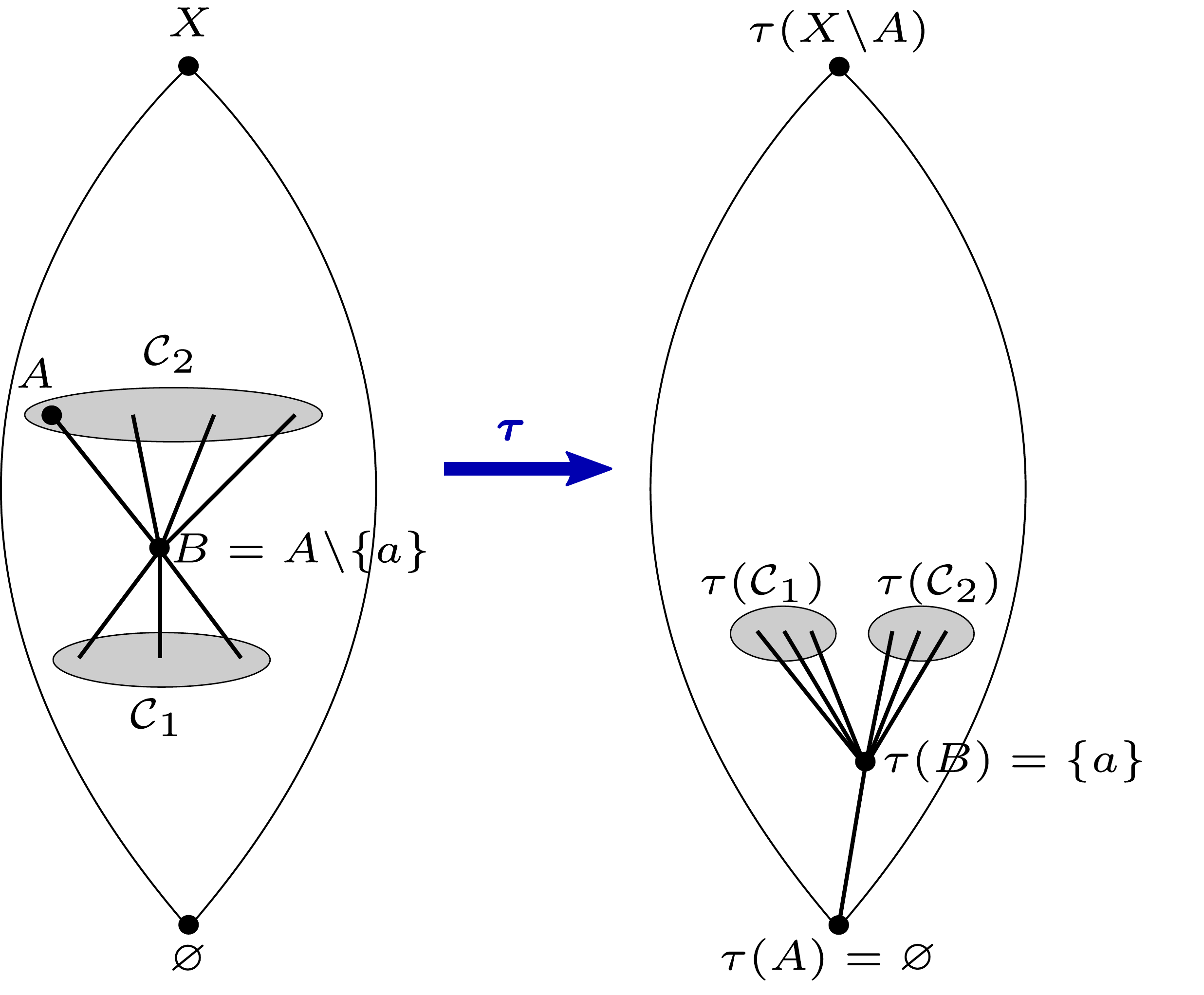}
		\caption{\label{fig_twist} A twisting mapping $\tau:S\mapsto S\triangle A$ of a clique to a pointed clique.}
	\end{figure}	
	
	We describe now the structure of pointed cliques in halved cubes.
	
	\begin{lemma} \label{lem_clique_halved-cube}
		Any pointed maximal clique $\mathcal C$ of a halved cube $\frac{1}{2}Q_m$ is (a) a sporadic 4-clique of the
		form $\{ \varnothing, \{ e_i,e_j\}, \{ e_i,e_k\},\{ e_j,e_k\}\}$ for arbitrary elements $e_i,e_j,e_k\in X$, or (b) a
		clique of size $m$ of the form $\{ \varnothing\} \cup \{ \{ e_i,e_j\}: e_j\in X\setminus \{ e_i\}\}$ for an arbitrary
		but fixed element $e_i\in X$.
	\end{lemma}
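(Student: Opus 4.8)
The plan is to strip off the apex $\varnothing$ and read the remaining vertices of $\mathcal C$ as the edges of a graph on $X$. First, since $\varnothing\in\mathcal C$ and $\mathcal C$ is a clique, every other vertex $C\in\mathcal C$ must be adjacent to $\varnothing$ in $\frac{1}{2}Q_m$, i.e. $|C|=|C\triangle\varnothing|=2$; hence $\mathcal C\setminus\{\varnothing\}$ consists of $2$-element subsets of $X$, which I view as the edge set of a graph $H$ on vertex set $X$. Second, for two distinct members $C',C''$ of this family, adjacency in $\frac{1}{2}Q_m$ gives $2=|C'\triangle C''|=|C'|+|C''|-2|C'\cap C''|=4-2|C'\cap C''|$, so $|C'\cap C''|=1$. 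Thus any two edges of $H$ share exactly one endpoint, i.e. $H$ is an \emph{intersecting} family of edges.

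The combinatorial core is the (folklore) dichotomy that an intersecting family of edges is either contained in a star or equals a triangle. Concretely, if $|H|\le 1$ this is trivial, so assume $H$ has two distinct edges; they meet in a single vertex $e_i$, so they are $\{e_i,e_j\}$ and $\{e_i,e_k\}$ with $e_j\ne e_k$. If every edge of $H$ contains $e_i$, then $H\subseteq\{\{e_i,e_\ell\}:e_\ell\in X\setminus\{e_i\}\}$, the star at $e_i$. Otherwise some edge $f\in H$ avoids $e_i$; since $f$ meets $\{e_i,e_j\}$ and $\{e_i,e_k\}$ but not at $e_i$, it must contain both $e_j$ and $e_k$, so $f=\{e_j,e_k\}$ and the three sets $\{e_i,e_j\},\{e_i,e_k\},\{e_j,e_k\}$ all lie in $H$. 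A further edge $g\in H$ would have to meet each of these three: if $e_i\in g$ then $g=\{e_i,x\}$ with $x\in\{e_j,e_k\}$ (to meet $\{e_j,e_k\}$), and if $e_i\notin g$ then $e_j,e_k\in g$; either way $g$ is one of the three edges already present. Hence in this case $H$ is exactly the triangle on $\{e_i,e_j,e_k\}$.

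It remains to invoke maximality of $\mathcal C$. If $H$ is the triangle $\{\{e_i,e_j\},\{e_i,e_k\},\{e_j,e_k\}\}$, then $\mathcal C=\{\varnothing,\{e_i,e_j\},\{e_i,e_k\},\{e_j,e_k\}\}$; this clique is already maximal, since any vertex adjacent to $\varnothing$ is a $2$-set, and the argument of the previous paragraph shows no $2$-set outside the triangle meets all three of its edges (and no set of size $\ne 0,2$ is adjacent to $\varnothing$). This yields alternative (a). If instead $H$ is contained in the star $\{\{e_i,e_\ell\}:e_\ell\in X\setminus\{e_i\}\}$, then any $\{e_i,e_\ell\}$ not yet in $H$ is adjacent to $\varnothing$ and to every member of $H$ (pairwise intersection $\{e_i\}$), so by maximality $H$ must be the \emph{whole} star; thus $\mathcal C=\{\varnothing\}\cup\{\{e_i,e_\ell\}:e_\ell\in X\setminus\{e_i\}\}$, a clique of size $m$, which is alternative (b).

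I do not expect a genuine obstacle here; the only delicate points are the maximality bookkeeping in the last paragraph (in particular, checking that the $4$-clique in (a) cannot be enlarged) and a small-$m$ degeneracy, namely that for $m\le 3$ the full star at $e_i$ is itself contained in the $4$-clique $\{\varnothing,\{e_1,e_2\},\{e_1,e_3\},\{e_2,e_3\}\}$, so alternative (b) genuinely occurs only for $m\ge 4$; I would add a one-line remark to this effect. Everything else is the routine verification that the symmetric-difference computations above are reversible, which gives both that the listed families are cliques and that no vertex can be added.
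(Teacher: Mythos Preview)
Your proof is correct and follows essentially the same approach as the paper: both arguments observe that the non-empty members of $\mathcal C$ are $2$-sets, interpret them as edges of a graph on $X$ (the paper phrases this as the line-graph of $K_m$), and then prove the triangle/star dichotomy for pairwise-intersecting edge families. Your version is in fact slightly more careful about invoking maximality in the star case and about the small-$m$ overlap between (a) and (b), which the paper leaves implicit.
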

	
	\begin{proof}
		Since $\mathcal C$ is a pointed clique, $\varnothing$ is a vertex of $\mathcal C$, denote it $C_0$. All other
		neighbors of $C_0$ in $\frac{1}{2}Q_m$ are sets of the form $\{e_i,e_j\}$ with $e_i,e_j\in X$, i.e., the
		neighborhood of $C_0$ in the halved cube $\frac{1}{2}Q_m$ is the line-graph of the complete graph $K_m$
		having $X$ as the vertex-set. In particular, the clique	 ${\mathcal C}_0:={\mathcal C}\setminus \{ C_0\}$
		corresponds to a set of pairwise incident edges of $K_m$. It can be easily seen that this set of edges
		defines either a triangle or a star of $K_m$. Indeed, pick an edge $e_ie_j$ of $K_m$ corresponding to a pair
		$\{ e_i,e_j\}\in {\mathcal C}_0$. If the respective set of edges is not a star, then necessarily ${\mathcal C}_0$
		contains two pairs of the form $\{e_i,e_k\}$ and $\{ e_j,e_l\}$, both different from $\{ e_i,e_j\}$. But then
		$k=l$, otherwise the edges $e_ie_k$ and $e_je_l$ would not be incident. Thus ${\mathcal C}_0$ contains
		the three pairs $\{ e_i,e_j\},\{ e_i,e_k\}$, and $\{ e_j,e_k\}$. If ${\mathcal C}_0$ contains yet another pair, then
		this pair will be necessarily disjoint from one of the three previous pairs, a contradiction. Thus in this case,
		${\mathcal C}=\{ \varnothing, \{ e_i,e_j\}, \{ e_i,e_k\},\{ e_j,e_k\}\}$. Otherwise, if the respective set of edges is
		a star with center $e_i$, then ${\mathcal C}_0$ is a clique of size $m-1$ of the form $\{ \{ e_i,e_j\}: e_j\in
		X\setminus \{ e_i\}\}$.
	\end{proof}

	\section{The clique-VC-dimension}
	As we noticed above, the classical VC-dimension of set families cannot be used to bound the density of their
	1,2-inclusion graphs. Indeed, the 1,2-inclusion graph of the set family $\mcs_0:=\{ \{ e_j\}: e_j\in X\}$ is
	a complete graph, while the VC-dimension of $\mcs_0$ is $1$ (notice also that the 2VC-dimension of
	$\mcs_0$ is $0$).

	We will define a notion that is more appropriate for this purpose, which we will call clique-VC-dimension.
	The idea is to use the form of pointed cliques of $\frac{1}{2}Q_m$ established above and to shatter them. In
	view of
	Lemma \ref{odd-even}, it suffices to define the clique-VC-dimension for even set families.
	First we present a generalized
	 definition of classical shattering.
	
	Let $X=\{ e_1,\ldots,e_m\}$ and $\mcs\subseteq 2^X$. Let $Y$ be a subset of $X$. Denote by $Q[Y]$ the
	subcube
	of $Q_m$ consisting of all subsets of $Y$. Analogously, for two sets $Y'$ and $Y$ such that $Y'\subset Y$,
	denote by $Q[Y',Y]$ the smallest subcube of $Q_m$ containing the sets $Y'$ and $Y$: $Q[Y',Y]=\{ Z\subset
	X: Y'\subseteq Z\subseteq Y\}$. In particular, $Q[Y]=Q[\varnothing,Y]$. For a vertex $Z$ of $Q[Y',Y]$, call
	$$
	F(Z):=\{ Z\cup Z': Z'\subseteq X\setminus Y\}
	$$
	the {\it fiber} of $Z$ with respect to the cube $Q[Y',Y]$. Let
	$$
	\pi_{Q[Y',Y]}(\mcs):=\{ Z\in Q[Y',Y]: F(Z)\cap \mcs\ne \varnothing\}
	$$
	denote the {\it projection} of the set family $\mcs$ on $Q[Y',Y]$. Then the cube $Q[Y',Y]$ with $Y'\subseteq
	Y$ is shattered by $\mcs$ if $\pi_{Q[Y',Y]}(\mcs)=Q[Y',Y]$, i.e., for any $Y'\subseteq Z\subseteq Y$ the fiber
	$F(Z)$ contains a set of $\mcs$ (see Fig. \ref{fig_shattering}). In particular, a subset $Y$ is shattered by
	$\mcs$ iff $\pi_{Q[Y]}(\mcs)=Q[Y]$.
	
	\subsection{The clique-VC-dimension of pointed even set families} Let $\mcs$ be a pointed even set family
	of $2^X$, i.e., a set family in which all sets have even size and $\varnothing\in \mcs$.
	Let $Y$ be a subset of $X$ and let $e_i$ be an element of $X$ not belonging to $Y$. Denote by $P(e_i,Y)$ the
	set of all {\it $2$-sets}, i.e., pairs of the form $\{e_i,e_j\}$ with $e_j\in Y$. Then $Q[\{ e_i\},Y\cup\{
	e_i\}]$ is the smallest subcube of $Q_m$ containing $e_i$ and all the $2$-sets of $P(e_i,Y)$. For simplicity,
	we will denote this cube by $Q(e_i,Y)$.
	
	We will say that a pair $(e_i,Y)$ with $Y\subset X$ and $e_i\notin
	Y$ is {\it c-shattered} by $\mcs$ if there exists a surjective function $f: \pi_{Q(e_i,Y)}(\mcs)\rightarrow P(e_i,Y)$
	such that for any $S\in \pi_{Q(e_i,Y)}(\mcs)$ the inclusion $f(S)\subseteq S$ holds. In other words, $(e_i,Y)$ is
	c-shattered by $\mcs$ if each $2$-set $\{e_i,e_j\}\in P(e_i,Y)$ admits an extension $S_j\in \pi_{Q(e_i,Y)}(\mcs)$
	such that $\{ e_i,e_j\}\subseteq S_j$ and for any two 2-sets $\{e_i,e_j\}, \{e_{i},e_{j'}\}\in P(e_i,Y)$ the sets $S_j$
	and $S_{j'}$ are distinct. Since $\varnothing\in \mcs$, the empty set $\varnothing$ is always shattered by
	$\mcs$.
	
	\begin{figure}
		\centering
		\includegraphics[width=0.5\linewidth]{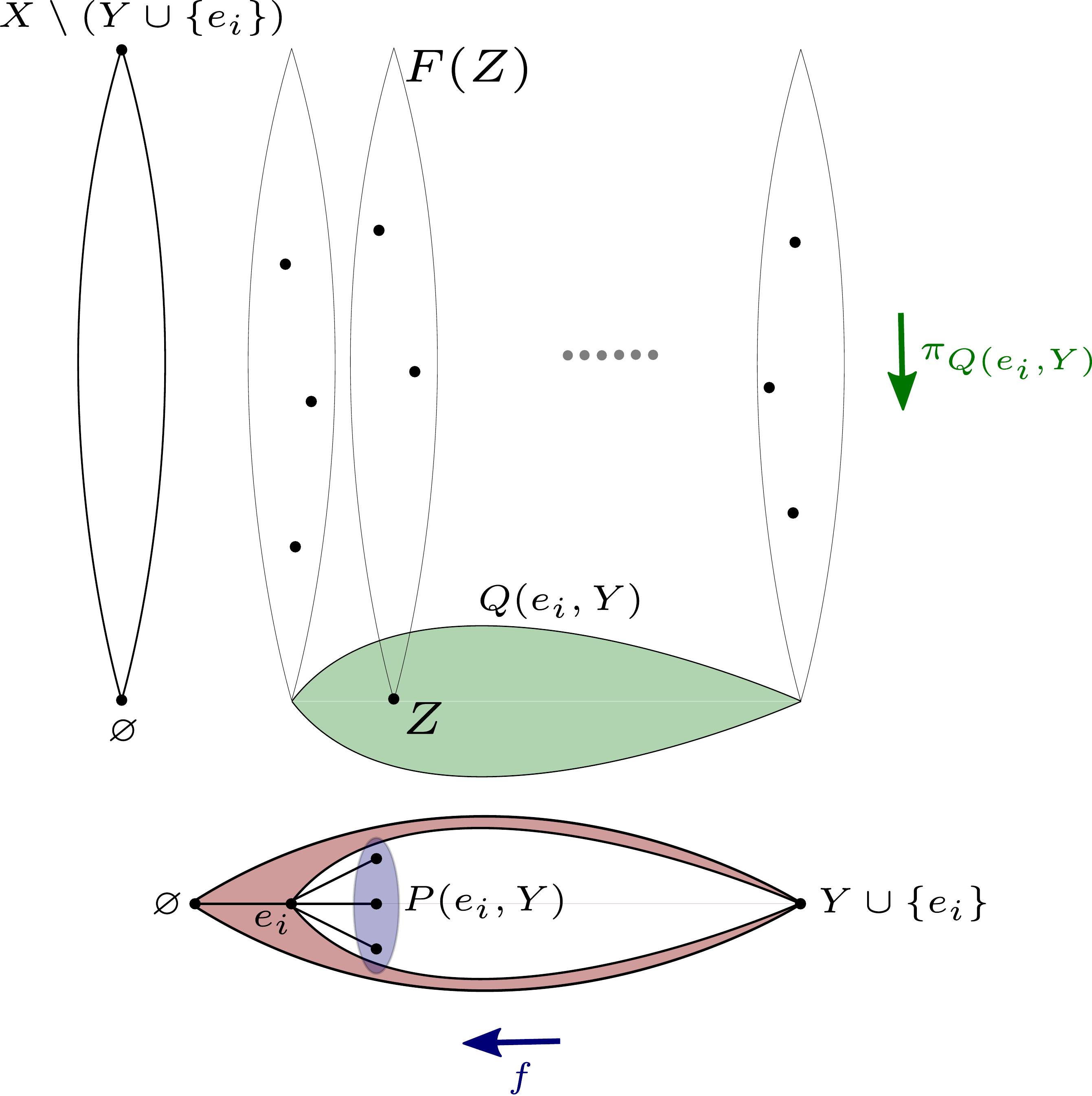}
		\caption{\label{fig_shattering}
			Example of a c-shattered pair $(e_i,Y)$. $F(Z)$ is the fiber of $Z$ in $Q(e_i,Y)$. The sets of $\mcs$
			(black points) in the fibers of the sets of $Q(e_i,Y)$ are projected on $Q(e_i,Y)$ (in green). The vertices
			in $Q(e_i,Y)$ are then mapped to $P(e_i,Y)$ (in blue) by the c-shattering function $f$. The remaining vertices
			of $Q[\varnothing,Y\cup\{e_i\}]$ (in red) are ``not used'' for shattering.
		}
	\end{figure}	
	
	For a pointed even set family $\mcs$, the {\it clique-VC-dimension} is
	$$
	\vccdim(\mcs):=\max\{ |Y|+1: Y\subset X \mbox{ and } \exists e_i\in X\setminus Y \mbox{ such that }
	(e_i,Y) 	
	\mbox{ is c-shattered by } \mcs\}.
	$$

	We continue with some simple examples  of clique-VC-dimension:
	
	\begin{example} \label{example1} For set family $\mcs_0=\{ \{ e_j\}: e_j\in X\}$ introduced above, let
    $\mcs^+_0=\{ \{ e_j,e_{m+1}\}: e_j\in X\}$ be the lifting of $\mcs_0$ to an even set family. For an
	arbitrary (but fixed) element $e_i$, let $\mcs_1:=\{ \varnothing\}\cup \{ \{ e_i,e_j\}: e_j\ne e_i\}$.
    Then $\mcs_1$ coincides with $\mcs_0\Delta \{ e_i\}$ and with $\mcs^+_0\Delta \{ e_i,e_{m+1}\}$.
    $\mcs_1$ is an even set family, its 1,2-inclusion graph is a pointed clique, and $\vccdim(\mcs_1)=|X|=m$.
	\end{example}
	
	\begin{example} \label{example3} Let $\mcs_2=\{ \varnothing, \{ e_1,e_2\}, \{ e_1,e_3\},\{ e_2,e_3\}\}$ be the
		sporadic 4-clique from
		Lemma \ref{lem_clique_halved-cube}. In this case, one can c-shatter any two of the pairs $\{ e_1,e_2\}, \{
		e_1,e_3\},\{ e_2,e_3\}$ but
		not all three. This shows that $\vccdim(\mcs_2)=2+1=3$.
	\end{example}
	
	\begin{example} \label{example4} For arbitrary even integers $m$ and $k$, Let $X$ be a ground set of size
		$m+km$ which is the disjoint union of $m+1$ sets $X_0,X_1,\ldots,X_m$, where $X_0=\{ e_1,\ldots,e_m\}$
		and
		$X_i=\{ e_{i1},\ldots,e_{ik}\}$ for each $i=1,\ldots,m$.
		Let $\mcs_3$ be the pointed even set family consisting of the empty set $\varnothing$, the set $X$, and
		for
		each $i=1,\ldots,m$ of all the $2$-sets of $P(e_i,X_i)=\{ \{ e_i,e_{i1}\},\ldots, \{ e_i,e_{ik}\}\}$.
		Then $\incgraph(\mcs_3)$ consists of an isolated vertex $X$ and $m$ maximal cliques ${\mathcal
			C}_i:=P(e_i,X_i)\cup \{ \varnothing\}$ of size $k+1$ and these cliques pairwise intersect in a single vertex
		$\varnothing$. We assert that $\vccdim(\mcs_3) = k+2$. Indeed, let $Y$ be the set consisting of $X_i$ for a
		given $i\in\{1,\ldots,m\}$ plus the singleton $\{e_{(i+1)1}\}$. Then the pair $(e_i,Y)$ is c-shattered by
		$\mcs_3$.
		The c-shattering map $f:\pi_{Q(e_i,Y)}(\mcs_3) \to P(e_i,Y)$ is defined as follows: every $2$-set of
		$P(e_i,X_i)\subset Q(e_i,Y)$ is in $\mcs_3$ and is thus mapped to itself, $X\cap(Y\cup\{e_i\}) = Y\cup\{e_i\}$
		is
		an extension of the remaining $2$-set $\{e_i,e_{(i+1)1}\}$ in $Q(e_i,Y)$ and thus $f(Y\cup\{e_i\}) :=
		\{e_i,e_{(i+1)1}\}$. Since $|Y|=k+1$, we showed that $\vccdim(\mcs_3)\geq k+2$.		
		On the other hand, $\vccdim(\mcs_3) \leq k+2$ because every element $e$ from $X$ is in at most $k+1$ sets of $\mcs_3$.
		Therefore, $\vccdim(\mcs_3) = k+2$.
	\end{example}

	\begin{figure}
		\centering
		\includegraphics[width=0.5\linewidth]{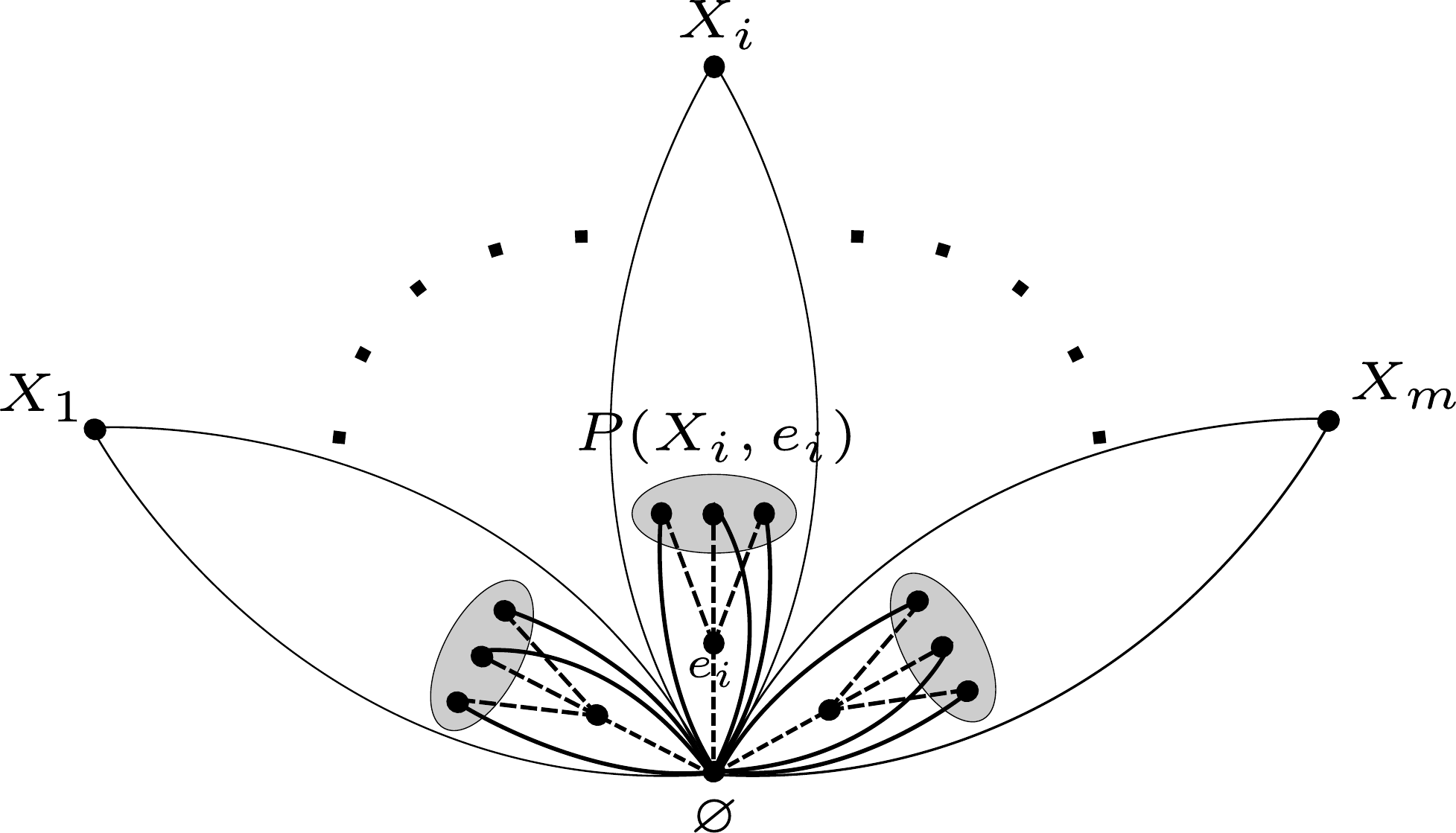}
		\caption{\label{fig_cvc-dim} Illustration of Example \ref{example4}}
	\end{figure}	
	
	\subsection{The clique-VC-dimension of even and arbitrary set families}
	The {\it clique-VC-dimension} $\vccdim^*(\mcs)$ of an even set family $\mcs$ is the minimum of the
	clique-VC-dimensions of the pointed even set families $\mcs\triangle A$ for $A\in \mcs$:
	$$
	\vccdim^*(\mcs):=\min \{ \vccdim(\mcs\triangle A): A\in \mcs\}.
	$$
	The {\it clique-VC-dimension} $\vccdim^*(\mcs)$ of an arbitrary set family $\mcs$ is the clique-VC-dimension
	of its  lifting $\mcs^+$.
	
	\begin{remark} \label{remark5} A simple analysis shows that for the even set families from Examples
		\ref{example1}-\ref{example4},
		we have $\vccdim^*(\mcs_1)=m$, $\vccdim^*(\mcs_2)=3$, and $\vccdim^*(\mcs_3)=k+2$.
	\end{remark}

	\begin{remark} In fact, the set family $\mcs_3$ shows that the maximum degree of a 1,2-inclusion graph
		$\incgraph(\mcs)$ of an even set
		family  $\mcs$ can be arbitrarily larger than $\vccdim^*(\mcs)$. Indeed, $\varnothing$ is the vertex of
		maximum degree of
		$\incgraph(\mcs_3)$ and its degree is $km$.
		
		The family $\mcs_3$ also explains why in the definition of the clique-VC-dimension of $\mcs$ we take the
		minimum
		over all $\mcs\triangle A, A\in \mcs$.  Consider
		the twisting of $\mcs_3$ with respect to the set $X\in \mcs_3$. Then one can see that
		$\vccdim(\mcs_3\triangle X)\ge (m-1)k+1$.	Indeed, $\mcs_3\Delta X=\{\varnothing,
		X\}\cup(\bigcup_{(i,j)\in\{1,\ldots,m\}\times\{1,\ldots,k\}}\{X\backslash\{e_i,e_{ij}\}\})$. Let $Y:=\{e_{ij} :
		i\in\{1,\ldots,m-1\}\text{ and } j\in\{1,\ldots,k\}\}$. We assert that $(e_1,Y)$ is c-shattered by $\mcs_3\Delta
		X$. We set $\mcs_3':=\pi_{Q(e_1,Y)}(\mcs_3\Delta X)$, $S_{ij}:=X\setminus\{e_i,e_{ij}\}$, and
		$S_{ij}':=\pi_{Q(e_1,Y)}(S_{ij})$. Let $f: \mcs_3' \to P(e_1,Y)$ be such that for all $i\in\{2,\ldots,m\}$ and
		$j\in\{1,\ldots,k\}$, we have $f(S_{ij}')=\{e_1,e_{(i-1)j}\}$.
		Clearly, every $\{e_1,e_{(i-1)j}\}$ has an extension $S_{ij}'$ with a non-empty fiber ($S_{ij}\in F(S_{ij}')$), and
		for all $S_{rl}\neq S_{ij}$, we have $S_{rl}'\neq S_{ij}'$, hence $f$ is a surjection. Therefore, $(e_1,Y)$ is
		c-shattered. Since $|Y|=(m-1)k$, whence $\vccdim(\mcs_3\triangle X)\ge (m-1)k+1$.
	\end{remark}

	\section{Proof of Theorem \ref{density}}
	After the preparatory work done in previous three subsections, here we present the proof of our main result.
	We start the proof by defining the double shifting (d-shifting) as an adaptation of the shifting to pointed even families.
	We show that, similarly to classical shifting operation, d-shifting
	satisfies the conditions (1)-(3) and that the result of a complete sequence of d-shiftings is a bouquet of
	halved cubes (which is a particular pointed even set family). We show that the degeneracy of the
	1,2-inclusion graph of such a bouquet $\bouquet$ is bounded by $\const$,
	where $d:=\vccdim(\bouquet)$. We conclude the proof of the theorem by considering arbitrary even set
	families
	$\mcs$ and applying the previous arguments to the pointed family $\mcs\Delta A$, where $A$ is a set of
	$\mcs$ such that $\vccdim(\mcs\Delta A)=\vccdim^*(\mcs)$.
	
	\subsection{Double shiftings of pointed even families}
	For a pointed even set family $\mcs\subseteq 2^X,$ the \emph{double shifting} ({\it d-shifting} for short)
	with respect to a 2-set $\{e_i,e_j\}\subseteq X$ is a map $\phiij: \mcs\rightarrow 2^X$ which replaces every set
	$S$
	of $\mcs$ such that $\{e_i,e_j\}\subseteq S$ and $S\setminus\{e_i,e_j\} \notin \mcs$ by the set
	$S\setminus\{e_i,e_j\}$:
	$$
	\begin{aligned}
	\phiij:~	& \mcs &\to & ~2^X	\\
	&S		& \mapsto	&\begin{cases}
	S\setminus\{e_i,e_j\},	&\text{if $\{e_i,e_j\}\subseteq S$ and $S\setminus\{e_i,e_j\}
		\notin \mcs$}	\\
	S,							&\text{otherwise.}
	\end{cases}.
	\end{aligned}
	$$
	
	\begin{proposition}\label{lem_push-down_augmente_densite}
		Let $\mcs \subseteq 2^X$ be a pointed even set family, let $\{e_i,e_j\} \subseteq X$ be a $2$-set, and let 	 	
		$\incgraph(\mcs) = G = (V,E)$ and $\incgraph(\phiij(\mcs)) = G' = (V',E')$ be the subgraphs of the halved
		cube
		induced by $\mcs$ and $\phiij(\mcs)$, respectively. Then $|V|=|V'|$ and $|E|\le |E'|$ hold.
	\end{proposition}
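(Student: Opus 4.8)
The plan is to prove the two statements separately, via an explicit bijection of vertex sets and an explicit injection of edge sets. Write $P:=\{e_i,e_j\}$ and $S_{-P}:=S\setminus P$. Deleting $P$ from an even set leaves it even and $P\not\subseteq\varnothing$, so $\phiij(\mcs)$ is again a pointed even set family; hence $G$ and $G'$ are induced subgraphs of $\frac{1}{2}Q_m$, in which two vertices are adjacent exactly when their symmetric difference has size $2$. Call $S\in\mcs$ \emph{movable} if $P\subseteq S$ and $S_{-P}\notin\mcs$, so that $\phiij(S)=S_{-P}$, and \emph{stable} otherwise, so that $\phiij(S)=S$; note that a stable set $B$ with $P\subseteq B$ satisfies $B_{-P}\in\mcs$. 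To get $|V|=|V'|$ it suffices to check that $\phiij$ is injective on $\mcs$: if $\phiij(S)=\phiij(T)$ with $S\neq T$, a short case analysis on whether each of $S,T$ is movable or stable yields a contradiction (two stable sets coincide; a movable $S$ and a stable $T$ give $T=S_{-P}\notin\mcs$, impossible; two movable sets give $S_{-P}=T_{-P}$, hence $S=T$ since both contain $P$). Thus $\phiij$ is a bijection from $\mcs$ onto $\phiij(\mcs)$, and $|V'|=|\mcs|=|V|$. For later use we record the \emph{vertex dichotomy}: for $C=\phiij(S)$, either $S=C$ and $C$ is a stable member of $\mcs$, or $S_{-P}=C$, in which case $C\cap P=\varnothing$, $C\notin\mcs$ and $S=C\cup P$ is the unique movable preimage of $C$; the two cases are mutually exclusive since they differ on membership in $\mcs$.

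For $|E|\le|E'|$ we build an injection $\Phi\colon E(G)\to E(G')$. Let $\{A,B\}$ be an edge of $G$, so $A,B\in\mcs$ and $|A\triangle B|=2$; each of $A,B$ is movable or stable. If both are stable, put $\Phi(\{A,B\})=\{A,B\}$. If both are movable, put $\Phi(\{A,B\})=\{A_{-P},B_{-P}\}$. If exactly one, say $A$, is movable while $B$ is stable, split according to $B$: if $P\subseteq B$ put $\Phi(\{A,B\})=\{A_{-P},B_{-P}\}$, and if $P\not\subseteq B$ put $\Phi(\{A,B\})=\{A_{-P},B\}$. These four rules cover all edges of $G$, and one then verifies that the output is a genuine edge of $G'$: each endpoint lies in $\phiij(\mcs)$ — in the mixed rule with $P\subseteq B$ this uses that $B_{-P}\in\mcs$ avoids $P$, hence is stable, hence belongs to $\phiij(\mcs)$; the two endpoints are distinct (immediate in each case); and their symmetric difference has size $2$ — when both endpoints contain $P$, deleting $P$ from both leaves the symmetric difference untouched, and in the last rule $B$ cannot contain neither of $e_i,e_j$ (that would force $B=A_{-P}\notin\mcs$), so $B$ contains exactly one of them and a two-coordinate check gives $|A_{-P}\triangle B|=2$.

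It remains to prove that $\Phi$ is injective, which is the crux. Given an edge $\{C,D\}=\Phi(\{A,B\})$ of $G'$, the vertex dichotomy labels each of $C,D$ as \emph{old} (a stable set of $\mcs$) or \emph{new} ($\notin\mcs$), and the four rules are separated by the $\mcs$-membership of their outputs: the first rule outputs two old vertices, the second two new ones, and the two mixed rules one of each. Hence the type of $\{C,D\}$ determines which rule was used, and the preimage can be read off. An old--old edge comes from the first rule with $\{A,B\}=\{C,D\}$; a new--new edge comes from the second with $\{A,B\}=\{C\cup P,D\cup P\}$. For an old--new edge with old endpoint $C$ and new endpoint $D$: the subcase $P\subseteq C$ is impossible, since then $e_i,e_j\in C\triangle D$ forces $C\triangle D=P$ and hence $D=C_{-P}\in\mcs$ (by stability of $C$), contradicting $D\notin\mcs$; if $C\cap P=\varnothing$ only the mixed rule with $P\subseteq B$ can apply — the other would require $|(D\cup P)\triangle C|=|D\triangle C|+2=4$ — giving $\{A,B\}=\{C\cup P,D\cup P\}$; and if $|C\cap P|=1$ only the mixed rule with $P\not\subseteq B$ can apply — the other would require $B_{-P}=C$ with $P\subseteq B$, forcing $C\cap P=\varnothing$ — giving $\{A,B\}=\{C,D\cup P\}$. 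In every situation at most one edge of $G$ maps to $\{C,D\}$, so $\Phi$ is injective and therefore $|E|\le|E'|$.

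The main obstacle is precisely this injectivity step: a single edge of $G'$ could a priori be the image of several edges of $G$, and in particular an old--new edge threatens to be produced both by the mixed rule that restores $P$ to both endpoints and by the one that restores $P$ to a single endpoint. The resolution hinges on two elementary facts used above — the size of $C\cap P\in\{0,1\}$ decides which of the two mixed rules could have produced $\{C,D\}$, and the movable/stable distinction fixes which endpoint of a candidate preimage contains $P$ — after which the preimage is forced. All remaining verifications are routine coordinatewise computations in $Q_m$.
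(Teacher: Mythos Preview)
Your proof is correct and follows essentially the same approach as the paper: the same explicit edge map (your Rules~1--4 coincide with the paper's stable case, Subcase~1$'$.1, Subcase~2$'$.1, and Subcase~2$'$.2 respectively), and the same injectivity argument based on whether the output endpoints lie in $\mcs$. Your vertex-level movable/stable bookkeeping and the old/new dichotomy make the case analysis somewhat more systematic than the paper's edge-level organisation, but the underlying construction and verification are identical.
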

	\begin{proof}	
		The fact that a d-shifting $\phiij$ preserves the number of vertices of an induced subgraph of halved cube
		immediately follows from the definition. Therefore we only need to show that $\phiij$ cannot decrease
		the number of edges, i.e., that there exists an injective map $\psiij: E\rightarrow E'$. We will call an edge
		$SS'$ of $G$ \emph{stable}	if $\phiij(S)=S$ and $\phiij(S')=S'$ hold and \emph{shiftable}
		otherwise. For each stable edge $SS'$ we will set $\psiij(SS'):=SS'$.
		
		Now, pick any shiftable edge $SS'$ of $E$. Notice that in this case $\{e_i,e_j\}\subseteq S$ or
		$\{e_i,e_j\}\subseteq
		S'$. To
		define $\psiij(SS')$, we distinguish two cases depending on whether $\{e_i,e_j\}$ is a subset of only one of
		the sets
		$S$, $S'$ or of both of them.
		
		\medskip\noindent
		{\bf Case 1$'$.} $\{e_i,e_j\}\subseteq S$ and $\{e_i,e_j\}\not\subseteq S'$ (the case $\{e_i,e_j\}\subseteq S'$
		and
		$\{e_i,e_j\}\not\subseteq S$ is similar).
		
		\medskip\noindent
		Since $\{e_i,e_j\}\not\subseteq S'$, necessarily $\phiij(S')=S'$. Since $SS'$ is shiftable, $\phiij(S)\ne S$, i.e.,
		$\phiij(S)=S\setminus \{ e_i,e_j\}=:Z$. We consider two cases depending on whether one of the elements
		$e_i$ or $e_j$ belongs to $S'$ or not.
		
		\medskip\noindent {\bf Subcase 1$'$.1.} $e_i\in S'$ and $e_j\not\in S'$ (the case $e_j\in S'$ and $e_i\not\in
		S'$ is
		similar). In this case, there is an element $e_k\in X$ such that $S\Delta S'=\{e_j,e_k\}$. Observe that
		$S\not\subseteq
		S'$
		since $e_j\not\in S'$ and $e_j\in S$. Hence
		either $S'\subseteq S$ or there exists $A\subset X$ such that $S'=A\cup\{e_k\}$ and $S=A\cup\{e_j\}$. In
		the
		former case, we have $S=S'\cup\{e_j,e_k\}$, $Z=S'\cup\{e_k\}\setminus\{e_i\}$, and $Z\Delta S'=\{e_i,e_k\}$.
		In the later
		case, we have $Z=A\setminus\{e_i\}$ and
		$Z\Delta S'=\{e_i,e_k\}$. In both cases, $|Z\Delta S'|=2$ and $ZS'\in E'$. We set $\psiij(SS'):=ZS'$.
		
		\medskip\noindent {\bf Subcase 1$'$.2.} $e_i\not\in S'$ and $e_j\not\in S'$. Then $S\Delta S'=\{e_i,e_j\}$ and
		so
		$S\setminus \{ e_i,e_j\}=Z=S'$.
		%We obtain a contradiction since, by definition of d-shifting, $Z=\phiij(S)$ cannot be in $\mcs$.
		We obtain a contradiction that  $SS'$ is shiftable (i.e., $Z=\phiij(S)$ cannot be
		in $\mcs$).
		
		\medskip\noindent
		{\bf Case 2$'$.} $\{e_i,e_j\}\subseteq S$ and $\{e_i,e_j\}\subseteq S'$.
		
		\medskip\noindent
		Set $Z:=S\setminus \{ e_i,e_j\}$ and $Z':=S'\setminus \{ e_i,e_j\}$. Then both sets $Z,Z'$ belong to
		$\phiij(\mcs)$
		and $ZZ'$ defines an edge of $G'$.
		Since $SS'$ is shiftable, at least one of the sets $Z,Z'$ does not belong to $\mcs$.
		
		\medskip\noindent
		{\bf Subcase 2$'$.1.} $Z,Z'\notin \mcs$. Then $\phiij(S)=Z$ and $\phiij(S')=Z'$ and $ZZ'$ is an
		edge of $G'$. In this case, we set $\psiij(SS'):=ZZ'$.
		
		\medskip\noindent
		{\bf Subcase 2$'$.2.} $Z\in \mcs$ and $Z'\notin \mcs$ (the case $Z\notin \mcs$ and $Z'\in \mcs$ is
		similar).
		Then $\phiij(S)=S,\phiij(S')=Z'$, and $ZZ'$ is an edge of $G'$
		but not  of $G$. In this case, we set $\psiij(SS'):=ZZ'$.
		
		\medskip
		It remains to show that the map $\psiij: E\rightarrow E'$ is injective. Suppose by way of contradiction that
		$G'$ contains an edge $ZZ'$ for which there exist two distinct edges $SS'$ and
		$CC'$ of $G$ such that $\psiij(SS')=\psiij(CC')=ZZ'$. Since at least one of the edges $SS'$
		and $CC'$ is different from $ZZ'$, from the definition of d-shifting we conclude that $ZZ'$ is not an edge
		of $G$, say $Z'\notin \mcs$. This also implies that $SS'$ and $CC'$ are shiftable edges of
		$G$.
		
		\medskip\noindent
		{\bf Case 1$''$.} $Z\notin \mcs$.
		
		\medskip\noindent
		From the definition of the map $\psiij$ and since $Z,Z'\notin \mcs$, both edges $SS'$ and $CC'$ are in
		Subcase 2$'$.1. This shows that $Z=S\setminus \{ e_i,e_j\}, Z'=S'\setminus \{ e_i,e_j\},$ and $Z=C\setminus \{
		e_i,e_j\},Z'=C'\setminus \{ e_i,e_j\}$, yielding $S=C$ and $S'=C'$, a contradiction.
		
		\medskip\noindent
		{\bf Case 2$''$.} $Z\in \mcs$.
		
		\medskip\noindent
		After an appropriate renaming of the sets $S,S'$ and $C,C'$, we can suppose that $\phiij(S)=\phiij(C)=Z$
		and
		$\phiij(S')=\phiij(C')=Z'$. Since $Z'\notin \mcs$, from the definition of the map $\psiij$, we deduce that
		$S'=Z'\cup \{e_i,e_j\}=C'$. On the other hand, since $Z\in \mcs$, we have either $S=C=Z$ which contradicts the choice of $SS' \neq
		CC'$, or $S\setminus\{e_i,e_j\}=C=Z$ (or the symmetric possibility $C\setminus\{e_i,e_j\}=S=Z$) which contradicts the
		fact that $SS'$ (or $CC'$) is shiftable.
		
		\medskip
		This shows that the map $\psiij: E\rightarrow E'$ is injective, thus $|E|\le |E'|$.
	\end{proof}
	
	\begin{lemma}\label{lem_push-down_baisse_vccdim2}
		If $\phiij$ is a d-shifting of a pointed even family  $\mcs\subset2^X$, then $\vccdim(\phiij(\mcs))
		\leq \vccdim(\mcs)$.
	\end{lemma}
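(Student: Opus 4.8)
The plan is to show that $\vccdim(\phiij(\mcs))\leq\vccdim(\mcs)$ by the contrapositive: if the d-shifted family $\phiij(\mcs)$ c-shatters some pair $(e_k,Y)$ with $e_k\notin Y$, then $\mcs$ also c-shatters some pair $(e_k,Y)$ with $|Y|$ at least as large. Write $\mcs':=\phiij(\mcs)$, and suppose $(e_k,Y)$ is c-shattered by $\mcs'$ via a surjection $f:\pi_{Q(e_k,Y)}(\mcs')\to P(e_k,Y)$ with $f(S)\subseteq S$. Because d-shifting only removes the pair $\{e_i,e_j\}$ from certain sets, it does not change $|V|$ (Proposition \ref{lem_push-down_augmente_densite}) and, more to the point, for every $S'\in\mcs'$ there is a witnessing set in $\mcs$ projecting onto the same point of $Q(e_k,Y)$: either $S'\in\mcs$ already, or $S'=S\setminus\{e_i,e_j\}$ for some $S\in\mcs$ with $\{e_i,e_j\}\subseteq S$. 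So I would lift the c-shattering map $f$ along $\phiij$, defining for each $Z\in\pi_{Q(e_k,Y)}(\mcs')$ a preimage $\widehat{Z}\in\mcs$ with $\pi_{Q(e_k,Y)}(\widehat{Z})=Z$.

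The main case analysis splits on the relationship between the shifting pair $\{e_i,e_j\}$ and the shattered pair $(e_k,Y)$. If $\{e_i,e_j\}\cap(Y\cup\{e_k\})=\varnothing$, then the fibers over $Q(e_k,Y)$ are unaffected by $\phiij$ in a way compatible with projection, and one checks directly that $\pi_{Q(e_k,Y)}(\mcs')\subseteq\pi_{Q(e_k,Y)}(\mcs)$, so the very same $f$ (restricted/extended) witnesses that $(e_k,Y)$ is c-shattered by $\mcs$. If $\{e_i,e_j\}$ meets $Y\cup\{e_k\}$, the situation is more delicate: removing $\{e_i,e_j\}$ from a set $S$ can move its projection from one vertex of $Q(e_k,Y)$ to another, so a $2$-set $\{e_k,e_\ell\}$ that got its extension in $\mcs'$ from $\phiij(S)$ may not have had an extension at the ``same'' vertex in $\mcs$. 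Here I would argue that $S$ itself (the preimage before shifting, which contains $\{e_i,e_j\}$, hence contains whatever of $e_i,e_j$ lies in $Y\cup\{e_k\}$) still contains the required $2$-set $\{e_k,e_\ell\}=f(\phiij(S))$, because $f(\phiij(S))\subseteq\phiij(S)\subseteq S$; thus $S\cap(Y\cup\{e_k\})$ is an extension of $\{e_k,e_\ell\}$ lying in $\pi_{Q(e_k,Y)}(\mcs)$. The injectivity/surjectivity needed to keep a valid c-shattering function is then recovered by a counting argument: distinct points of $\pi_{Q(e_k,Y)}(\mcs')$ have distinct preimage-projections in $\pi_{Q(e_k,Y)}(\mcs)$, exactly as in the injectivity part of Proposition \ref{lem_push-down_augmente_densite}, so composing with $f$ yields a surjection $\pi_{Q(e_k,Y)}(\mcs)\to P(e_k,Y)$ with the containment property.

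Concretely, the steps in order are: (i) fix a c-shattering pair $(e_k,Y)$ of $\mcs'$ and the map $f$; (ii) for each $Z\in\pi_{Q(e_k,Y)}(\mcs')$ pick a set $\widehat{Z}\in\mcs$ whose projection to $Q(e_k,Y)$ equals $Z$ if $Z$ is unaffected, or else the ``unshifted'' preimage $Z\cup\{e_i,e_j\}$, verifying $f(Z)\subseteq\widehat{Z}$ in all cases using $f(Z)\subseteq Z\subseteq\widehat{Z}\cup\{e_i,e_j\}$ together with $\{e_i,e_j\}\cap f(Z)=\varnothing$ or $\{e_i,e_j\}\subseteq\widehat{Z}$; (iii) show $Z\mapsto\pi_{Q(e_k,Y)}(\widehat{Z})$ is injective on $\pi_{Q(e_k,Y)}(\mcs')$; (iv) conclude that $g:=f\circ(\text{inverse of this injection})$, suitably extended to all of $\pi_{Q(e_k,Y)}(\mcs)$, is a c-shattering surjection for $\mcs$, so $|Y|+1\le\vccdim(\mcs)$, hence $\vccdim(\mcs')\le\vccdim(\mcs)$. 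The hard part will be step (ii)–(iii) in the case $e_k\in\{e_i,e_j\}$ (or $Y\cap\{e_i,e_j\}\neq\varnothing$), where one must be careful that two sets of $\mcs'$ with different projections do not collapse after being un-shifted, and that the $2$-set assigned by $f$ is still contained in the chosen preimage; this is precisely where the structural observation $f(Z)\subseteq Z$ and the bijective nature of $\phiij$ on vertex-sets (Proposition \ref{lem_push-down_augmente_densite}) must be combined.
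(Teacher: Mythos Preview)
Your overall strategy---show that any pair $(e_k,Y)$ c-shattered by $\phiij(\mcs)$ is also c-shattered by $\mcs$---is the paper's. The paper builds a surjection $g\colon\pi_{Q(e,Y)}(\mcs)\to\pi_{Q(e,Y)}(\phiij(\mcs))$ with $g(S')\subseteq S'$, using only the inclusion $\phiij(S)\subseteq S$, and composes to get $f':=f\circ g$; your lifting-then-injecting construction is the dual of this. In particular your case split on $\{e_i,e_j\}\cap(Y\cup\{e_k\})$ is unnecessary: the single observation $\phiij(S)\subseteq S$ already shows, uniformly, that every $Z\in\pi_{Q(e_k,Y)}(\phiij(\mcs))$ lies inside some $W\in\pi_{Q(e_k,Y)}(\mcs)$.

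However, your step~(iii) has a genuine gap that ``being careful'' cannot close. Injectivity of $Z\mapsto\pi_{Q(e_k,Y)}(\widehat{Z})$ can fail outright, because a d-shifting may strictly \emph{enlarge} the projection onto $Q(e_k,Y)$. Take $X=\{1,\dots,5\}$, $\mcs=\{\varnothing,\{1,2,3,4\},\{2,3,4,5\}\}$ (pointed, even), and d-shift by $\{1,2\}$, giving $\phiij(\mcs)=\{\varnothing,\{3,4\},\{2,3,4,5\}\}$. For $(e_k,Y)=(3,\{2,4\})$ one has $\pi_{Q(3,\{2,4\})}(\phiij(\mcs))=\{\{3,4\},\{2,3,4\}\}$, and this pair is c-shattered by $\phiij(\mcs)$ via $f(\{3,4\})=\{3,4\}$, $f(\{2,3,4\})=\{2,3\}$. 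But $\pi_{Q(3,\{2,4\})}(\mcs)=\{\{2,3,4\}\}$ has a single element, so no injection from a two-element set into it can exist, and $(3,\{2,4\})$ is \emph{not} c-shattered by $\mcs$. The bijectivity of $\phiij$ on $\mcs$ does not help: two distinct sets of $\mcs$ may share a projection, so bijectivity is lost on $Q(e_k,Y)$. (The lemma itself survives this example---$\mcs$ c-shatters $(2,\{1,5\})$, whence $\vccdim(\mcs)=\vccdim(\phiij(\mcs))=3$---but the \emph{same} pair need not transfer.) The paper's claimed surjection $g$ runs into the identical obstruction in this example, so this is a gap shared with the paper's argument rather than a departure from it.
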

	
	\begin{proof}

		Let $(e,Y)$ be c-shattered by $\mcsij := \phiij(\mcs)$ (recall that $Y\subset X$ and $e\notin Y$). Let  $\mcs' :=
		\pi_{Q(e,Y)}(\mcs)$ and $\mcsij' := \pi_{Q(e,Y)}(\phiij(\mcs))$. By definition of c-shattering, there exists a surjective function $f$ associating every element of $\mcsij'$ to a $2$-set $\{e,e'\}\in P(e,Y)$.
		We will define a surjective function $g$ from $\mcs'$ to $\mcsij'$, and derive from $f$ a c-shattering function
		$f':=f\circ g$ from $\mcs'$ to $P(e,Y)$. Let $S_{e'}\in \mcsij'$ be a set such that $f(S_{e'})=\{e,e'\}$.
		% it may be the case that $|\mcsij'|\leq|\mcs'|$, but not the converse. In this case, some sets of $\mcs'$ may not have an
		%image by $g$.
		If $S_{e'}\in\mcs'$, then the $2$-set $\{e,e'\}$ also has an extension in $\mcs'$ and we can set $g(S_{e'}) := S_{e'}$.
		If $S_{e'}\not\in\mcs'$, it means that there exists a set $S\in\mcs$ such that $S\neq\phiij(S)$ and $\phiij(S)$ is in the fiber
		$F(S_{e'})$ of $S_{e'}$ with respect to $\pi_{Q(e,Y)}$ in $\mcsij$. The set $S$ is in the fiber $F(S')$ of some set $S'\in\mcs'$
		with
		respect to $\pi_{Q(e,Y)}$. Since $\phiij(S)\subseteq S$, we have $S_{e'}\subseteq S'$ and $S'\in \mcs'$ is an extension of
		the $2$-set $\{e,e'\}$.
		We set $g(S') := S_{e'}$. Moreover, for every set $S'\in \mcs'\setminus\mcsij'$, there is a set $S\in F(S')$ such that $\phiij(S)\neq S$. In this case, there is a set $S_{e'}\in \mcsij'$ such that $\phiij(S)\in F(S_{e'})$. We set $g(S'):=S_{e'}$. We have $S_{e'}\subseteq S'$ since $\phiij(S)\subset S$.

		The function $g$ is surjective by definition and maps every set of $\mcs'$ either on itself or on a subset of it. Since $f$ is a
		c-shattering function, so is $f' := f\circ g$ and $(e,Y)$ is c-shattered by $\mcs$. Consequently,  we have
		$\vccdim(\phiij(\mcs)) \leq \vccdim(\mcs)$ since every $(e,Y)$ c-shattered by $\mcsij$ is also	c-shattered by $\mcs$.
	\end{proof}
	
	\begin{figure}
		\centering
		\includegraphics[width=0.55\textwidth]{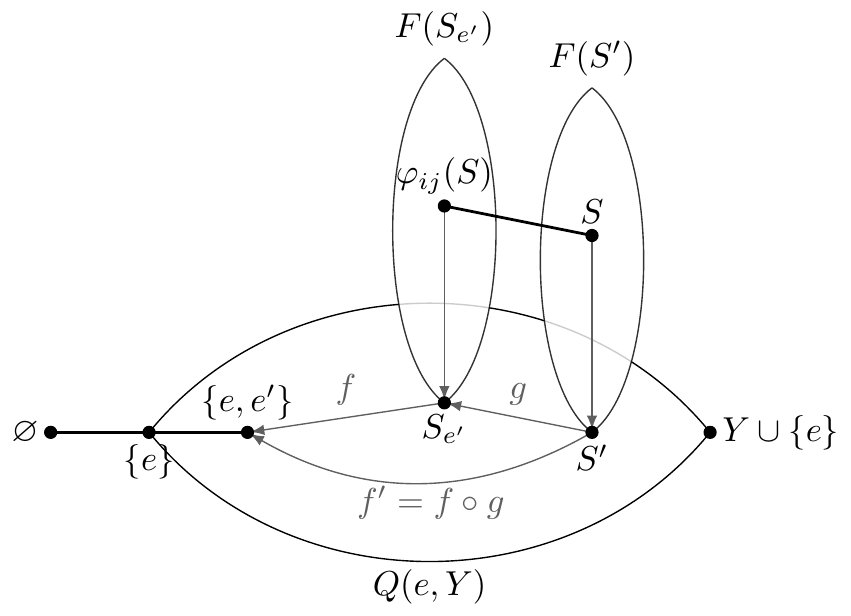}
		\caption{\label{fig_lemma_6} To the proof of Lemma~\ref{lem_push-down_baisse_vccdim2}.}
	\end{figure}

	\subsection{Bouquets of halved cubes}
	A {\it bouquet of cubes} (called usually a {\it downward closed family} or a {\it simplicial complex}) is a set
	family $\bouquet\subseteq 2^X$ such that $S\in \bouquet$ and $S'\subseteq S$ implies $S'\in \bouquet$.
	Obviously $\bouquet$ is a pointed family. Note that any bouquet of cubes $\bouquet$ is the union of all
	cubes of the form $Q[\varnothing,S]$, where $S$ is an inclusion-wise maximal subset of $\bouquet$.
	
	A {\it bouquet of halved cubes} is an even set family $\bouquet\subseteq 2^X$ such that for any $S\in
	\bouquet$, any subset $S'$ of $S$ of even size is included in  $\bouquet$. In other words, a bouquet of
	halved cubes
	$\bouquet$ is the union of all halved cubes spanned by $\varnothing$ and inclusion-wise maximal subsets $S$ of $\bouquet$.

	\begin{lemma} \label{punctured-bouquet} After a finite number of d-shiftings, any pointed even set
		family $\mcs$ of $2^X$ can be transformed into a bouquet of halved cubes.
	\end{lemma}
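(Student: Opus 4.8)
The plan is to adapt the classical argument showing that a complete sequence of (single-element) shiftings turns a set family into a simplicial complex. First I would attach to every family the potential $\Phi(\mcs):=\sum_{S\in\mcs}|S|$, a non-negative integer. A d-shifting $\phiij$ never enlarges a set and never touches $\varnothing$, so $\phiij(\mcs)$ is again a pointed even set family; moreover, by Proposition~\ref{lem_push-down_augmente_densite}, $\phiij$ is a bijection from $\mcs$ onto $\phiij(\mcs)$ with $|\phiij(S)|\le |S|$ for every $S$. If $\phiij(\mcs)\ne\mcs$, then at least one set $S$ with $\{e_i,e_j\}\subseteq S$ is actually replaced by $S\setminus\{e_i,e_j\}$, so $\Phi$ drops by at least $2$. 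Hence, starting from $\mcs$ and repeatedly applying any d-shifting that changes the current family, the process terminates after at most $\tfrac12\Phi(\mcs)$ steps at a pointed even family $\mcs^{*}$ that is fixed by every d-shifting: $\phiij(\mcs^{*})=\mcs^{*}$ for all $2$-sets $\{e_i,e_j\}\subseteq X$.

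Next I would show that this stability is equivalent to closure under removal of $2$-subsets, i.e., that $S\in\mcs^{*}$ and $\{e_i,e_j\}\subseteq S$ imply $S\setminus\{e_i,e_j\}\in\mcs^{*}$. The implication ``$\Leftarrow$'' is immediate from the definition of $\phiij$ (no set is ever replaced). For ``$\Rightarrow$'', suppose some $S\in\mcs^{*}$ with $\{e_i,e_j\}\subseteq S$ had $S\setminus\{e_i,e_j\}\notin\mcs^{*}$. Then $\phiij$ replaces $S$; since the only sets $\phiij$ could map onto $S$ are $S$ itself (when left unchanged) or $S\cup\{e_i,e_j\}$ (and only when $S\notin\mcs^{*}$), and neither possibility applies, the set $S$ is absent from $\phiij(\mcs^{*})$, contradicting $\phiij(\mcs^{*})=\mcs^{*}$.

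Finally I would deduce that $\mcs^{*}$ is a bouquet of halved cubes. Let $S\in\mcs^{*}$ and let $S'\subseteq S$ have even cardinality; then $S\setminus S'$ has even size, hence is a disjoint union of $k$ pairs $\{f_1,g_1\},\ldots,\{f_k,g_k\}$. Putting $T_0:=S$ and $T_\ell:=T_{\ell-1}\setminus\{f_\ell,g_\ell\}$, a straightforward induction using the equivalence above gives $T_\ell\in\mcs^{*}$ for all $\ell$, and $T_k=S'$; thus every even subset of a member of $\mcs^{*}$ lies in $\mcs^{*}$, which is exactly the definition of a bouquet of halved cubes, proving the lemma. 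The only point needing care is the second paragraph — verifying that a d-shifting which is not the identity on $\mcs$ genuinely changes the family (so that $\Phi$ strictly decreases) and that stability forces the pair-closure property; the termination count and the pair-peeling argument in the last step are routine.
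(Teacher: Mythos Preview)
Your proposal is correct and follows essentially the same approach as the paper: both use the potential $\sum_{S\in\mcs}|S|$ to show termination, then argue that a family fixed by every d-shifting is closed under removing $2$-subsets, hence a bouquet of halved cubes. You simply spell out in more detail the equivalence between stability and pair-closure and the pair-peeling induction to reach an arbitrary even subset, which the paper leaves implicit.
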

	
	\begin{proof} Let $\mcs_0,\mcs_1,\mcs_2,\ldots$ be a sequence of even set families such that $\mcs_0=\mcs$
		and, for any $i\ge 1$, $\mcs_{i}$ was obtained from $\mcs_{i-1}$ by a d-shifting and $\mcs_i\ne \mcs_{i-1}$.
		This sequence  is necessarily finite because each d-shifting strictly decreases the sum of sizes of
		the sets in the family.  Let $\mcs_r$ denote the last family in the sequence. This means that the d-shifting of
		$\mcs_r$ with respect to any pair of elements of $X$ leads to the same set family $\mcs_r$. Therefore, for
		any
		set $S\in \mcs_r$ and for any pair $\{ e_j,e_k\}\subseteq S$, the set $S\setminus \{ e_j,e_k\}$ belongs to
		$\mcs_r$, i.e.,
		$\mcs_r$ is a bouquet of halved cubes.
	\end{proof}

	We continue with simple properties of bouquets of halved cubes.
	
	\begin{lemma} \label{bouquet-properties}
		Let $\bouquet\subset2^X$ be a bouquet of halved cubes of \vccdimt $d:=\vccdim(\bouquet)$. Then the
		following properties hold:
		\begin{itemize}
			\item[(i)] for any element $e_i\in X$, $|\{\{e_i,e_j\}\in\bouquet : e_j\in X\setminus\{e_i\}\}| \le d-1$;
			\item[(ii)] if $S$ is a set of $\bouquet$, then $|S| \le d$;
			\item[(iii)] if $S$ is a set of $\bouquet$ maximal by inclusion, then $\bouquet\setminus\{S\}$ is still a
			bouquet of halved cubes.
		\end{itemize}
	\end{lemma}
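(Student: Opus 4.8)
The plan is to establish the three items of Lemma~\ref{bouquet-properties} in turn, using the definition of \vccdimt and the downward-closure property of bouquets of halved cubes. The first two items are the key ones; item (iii) is almost immediate from the definition of a bouquet.

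\textbf{Item (i).} Suppose for contradiction that some element $e_i\in X$ lies in at least $d$ distinct $2$-sets $\{e_i,e_{j_1}\},\ldots,\{e_i,e_{j_d}\}$ of $\bouquet$. Let $Y:=\{e_{j_1},\ldots,e_{j_d}\}$; then $e_i\notin Y$ and $|Y|=d$. I claim $(e_i,Y)$ is c-shattered by $\bouquet$. Indeed, each $2$-set $\{e_i,e_{j_\ell}\}$ of $P(e_i,Y)$ lies in $\bouquet$, hence its projection onto $Q(e_i,Y)$ is $\{e_i,e_{j_\ell}\}$ itself and lies in $\pi_{Q(e_i,Y)}(\bouquet)$; defining $f(\{e_i,e_{j_\ell}\}):=\{e_i,e_{j_\ell}\}$ on these and extending $f$ arbitrarily on the remaining elements of $\pi_{Q(e_i,Y)}(\bouquet)$ (every such element contains $e_i$, and since it is a subset of some member of $\bouquet$ of the form $e_i\cup(\text{pairs})$, one checks it contains at least one $e_{j_\ell}$ — actually the cleanest route is: $f$ need only be \emph{surjective} with $f(S)\subseteq S$, and we have already hit every $2$-set of $P(e_i,Y)$) gives the required surjection. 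Hence $(e_i,Y)$ is c-shattered, so $\vccdim(\bouquet)\ge |Y|+1=d+1$, a contradiction.

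\textbf{Item (ii).} Let $S\in\bouquet$ with $|S|=:2s$. Pick any element $e_i\in S$ (if $S=\varnothing$ then $|S|=0\le d$ trivially, since $d\ge 1$ as $\varnothing$ is always shattered). Since $\bouquet$ is a bouquet of halved cubes and $\varnothing\subseteq S$, for every pair $\{e_i,e_j\}\subseteq S$ we have $\{e_i,e_j\}\in\bouquet$ (it is an even subset of $S$). Thus $e_i$ lies in at least $|S|-1=2s-1$ distinct $2$-sets of $\bouquet$, namely $\{e_i,e_j\}$ for the $2s-1$ elements $e_j\in S\setminus\{e_i\}$. By item (i), $2s-1\le d-1$, so $|S|=2s\le d$. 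Alternatively, and perhaps more transparently, set $Y:=S\setminus\{e_i\}$; then $|Y|=|S|-1$, $e_i\notin Y$, and the c-shattering map sending each $\{e_i,e_j\}\subseteq S$ to itself witnesses that $(e_i,Y)$ is c-shattered, giving $\vccdim(\bouquet)\ge |S|$, i.e. $|S|\le d$.

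\textbf{Item (iii).} Let $S$ be a member of $\bouquet$ maximal by inclusion and put $\bouquet':=\bouquet\setminus\{S\}$. We must check that $\bouquet'$ is still a bouquet of halved cubes, i.e.\ that for every $T\in\bouquet'$ and every even subset $T'\subseteq T$ we have $T'\in\bouquet'$. Since $\bouquet$ is a bouquet of halved cubes, $T'\in\bouquet$; it remains to see $T'\ne S$. If $T'=S$, then $S=T'\subseteq T$ with $T\in\bouquet$ and $T\ne S$, contradicting the maximality of $S$ by inclusion. Hence $T'\in\bouquet'$, and $\bouquet'$ is a bouquet of halved cubes. (Note also $\varnothing\ne S$ unless $\bouquet=\{\varnothing\}$, in which case there is nothing to remove; in general $\bouquet'$ remains pointed because the only way $\varnothing$ could be removed is $S=\varnothing$, forcing $\bouquet=\{\varnothing\}$.)

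The main subtlety is in item (i): one must be careful that the c-shattering function $f$ is only required to be a surjection from $\pi_{Q(e_i,Y)}(\bouquet)$ onto $P(e_i,Y)$ with $f(S)\subseteq S$, not a bijection, so it suffices to exhibit, for each of the $d$ pairs $\{e_i,e_{j_\ell}\}$, one projected set containing it — and here that set is the pair itself, which lies in $\pi_{Q(e_i,Y)}(\bouquet)$ precisely because $\{e_i,e_{j_\ell}\}\in\bouquet\subseteq Q(e_i,Y)$. Items (ii) and (iii) are then routine consequences of (i) and of downward-closure, respectively.
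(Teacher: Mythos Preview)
Your proof is correct and matches the paper's approach: for (i) the paper simply says it ``directly follows from the definition of $\vccdim(\bouquet)$''; for (ii) the paper uses precisely your alternative argument (pick $e\in S$, set $Y:=S\setminus\{e\}$, and observe that $(e,Y)$ is c-shattered since every $2$-set $\{e,e'\}\subseteq S$ lies in $\bouquet$); and (iii) is declared immediate from the definition. Your brief hesitation in (i) about extending $f$ to the remaining elements of the projection is moot under the paper's own working characterization of c-shattering (each $2$-set of $P(e_i,Y)$ admits a distinct extension in $\pi_{Q(e_i,Y)}(\bouquet)$), which is exactly what both you and the paper invoke.
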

	
	\begin{proof}
		The inequality $|\{\{e_i,e_j\}\in\mcs : e_j\in X\setminus\{e_i\}\}| \leq d-1$ directly follows from the
		definition of $\vccdim(\bouquet)$. The property (iii) immediately follows from the definition of a bouquet of
		halved cubes. To prove (ii), suppose by way of contradiction that $|S|>d$. Since $\bouquet$ is a
		bouquet of halved cubes, every subset of $S$ of even cardinality belongs to $\bouquet$. Therefore, if we
		pick any $e\in S$ and if we set $Y:=S\setminus \{e\}$, then all the $2$-sets of the form $\{e,e'\}$ with
		$e'\in	Y$ are subsets of $S$, and thus are sets of $\bouquet$. Consequently $(e,Y)$ is c-shattered by
		$\bouquet$. Since $|Y|=|S|-1>d-1$, this contradicts the assumption that $d=\vccdim(\bouquet)$.
	\end{proof}
	
	\subsection{Degeneracy of bouquets of halved cubes} 	
	In this subsection we prove the following upper bound for degeneracy of 1,2-inclusion graphs of bouquets of
	halved cubes:
	
	\begin{proposition} \label{bouquet-density}
		Let $\bouquet\subset2^X$ be a bouquet of halved cubes of \vccdimt $d:=\vccdim(\bouquet)$, and let $G :=
		\incgraph(\bouquet)$. Then the degeneracy of $G$ is at most $\binom{d}{2}$.
	\end{proposition}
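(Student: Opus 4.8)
The strategy is to establish a good vertex-elimination order for $G = \incgraph(\bouquet)$ by induction on the number of vertices of $\bouquet$. By Lemma~\ref{bouquet-properties}(iii), removing an inclusion-wise maximal set $S$ of $\bouquet$ leaves a bouquet of halved cubes $\bouquet\setminus\{S\}$, whose clique-VC-dimension is at most $d$ (this is monotone, since any c-shattered pair for the smaller family is c-shattered for $\bouquet$). So it suffices to show that there is always \emph{some} maximal set $S$ whose vertex in $G$ has degree at most $\binom{d}{2}$; iterating this choice produces the desired elimination order.

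The key step is therefore: pick $S$ to be a maximal set of $\bouquet$ of \emph{maximum cardinality} $r$, and bound the degree of the vertex $S$ in $G$. Its neighbors come in three flavors: vertical edges to sets $S'$ with $|S'| = r-2$ (these satisfy $S' \subset S$, since a set of size $r+2$ would contradict maximality of $r$, and $|S\triangle S'|=2$ forces containment), vertical edges to sets $S'$ with $|S'|=r+2$ (none, by maximality of $r$), and horizontal edges to sets $S'$ with $|S'|=r$ and $|S\triangle S'|=2$. For the vertical neighbors below: such an $S'$ is obtained from $S$ by deleting a $2$-subset of $S$, so their number is at most $\binom{r}{2} \le \binom{d}{2}$ by Lemma~\ref{bouquet-properties}(ii). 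The main work is to argue that when there are horizontal neighbors, they can be absorbed into this count, or else to argue the horizontal neighbors alone number at most $\binom{d}{2}$ and the vertical ones then do not push us over.

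The hard part will be controlling the horizontal edges. A horizontal neighbor $S' = (S\setminus\{e_i\})\cup\{e_j\}$ with $e_i\in S$, $e_j\notin S$ corresponds to a pair $(e_i,e_j)$. I expect to show that $S'$ being in $\bouquet$ forces a large c-shattered configuration: since $\bouquet$ is a bouquet of halved cubes, all even subsets of $S'$ lie in $\bouquet$, in particular all $2$-sets $\{e_j, e_k\}$ with $e_k \in S\setminus\{e_i\}$, together with all $2$-sets $\{e_i, e_k\}$ with $e_k\in S\setminus\{e_i\}$. Combining the star at $e_i$ inside $S$ with such transversal pairs should let one c-shatter a pair $(e_i, Y)$ with $|Y|$ exceeding $r-1$ unless very few distinct endpoints $e_j$ occur; quantifying this gives a bound on the number of horizontal neighbors in terms of $d - r$. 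Balancing the $\binom{r}{2}$ vertical neighbors against roughly $r(d-r)$ (or fewer) horizontal ones, and checking that the total never exceeds $\binom{d}{2}$ for $2\le r\le d$, is the crux; the inequality $\binom{r}{2} + r(d-r) \le \binom{d}{2}$ does hold for $r\le d$, so the real content is proving that the horizontal neighbors are indeed bounded by the relevant quantity via a c-shattering argument. Once that local degree bound is in hand, the induction closes immediately and the degeneracy, hence $\frac{|E|}{|V|}$, is at most $\binom{d}{2}$.
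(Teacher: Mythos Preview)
Your overall strategy matches the paper's: remove a maximum-cardinality set $S$ of size $r$, bound its degree by $\binom{r}{2}$ downward vertical edges plus at most $r(d-r)$ horizontal edges, and check $\binom{r}{2} + r(d-r) = \binom{d}{2} - \binom{d-r}{2} \le \binom{d}{2}$. The induction, the arithmetic, and the target bound on horizontal edges are exactly as in the paper.

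However, your sketch of the horizontal-edge bound has a concrete gap. You propose to c-shatter a pair $(e_i, Y)$ with $e_i$ the swapped-out element and with $Y$ enlarged beyond $S\setminus\{e_i\}$ by the swapped-in elements $e_j$. But c-shattering $(e_i, Y)$ requires, for each $y \in Y$, a set in $\pi_{Q(e_i,Y)}(\bouquet)$ containing $\{e_i, y\}$; in particular some set of $\bouquet$ must contain both $e_i$ and $e_j$. There is no reason this happens: $S$ misses $e_j$, the horizontal neighbor $(S\setminus\{e_i\})\cup\{e_j\}$ misses $e_i$, and the ``transversal pairs'' $\{e_j, e_k\}$ you identify also miss $e_i$, so none of them project into $Q(e_i, Y)$ at all. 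The fix, which is exactly what the paper does in Lemma~\ref{bouquet-aux}, is to center the star not at $e_i$ but at a \emph{third} element $y \in S\setminus\{e_i\}$. For fixed $e_i$ let $J_i = \{e_j \notin S : (S\setminus\{e_i\})\cup\{e_j\} \in \bouquet\}$. Then the $2$-sets $\{y, e'\}$ for $e' \in S\setminus\{y\}$ (even subsets of $S$) and $\{y, e_j\}$ for $e_j \in J_i$ (even subsets of $(S\setminus\{e_i\})\cup\{e_j\}$, since $y\neq e_i$) all lie in $\bouquet$; Lemma~\ref{bouquet-properties}(i) then gives $(r-1) + |J_i| \le d-1$, hence $|J_i| \le d-r$. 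Summing over the $r$ choices of $e_i \in S$ yields the desired $r(d-r)$ bound, and with this correction your argument coincides with the paper's.
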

	
	\begin{proof}
		Let $S$ be a set of maximal size of $\bouquet$. %(recall that $S$ has to be of even cardinality).
		By Lemma \ref{bouquet-properties}(iii), $\bouquet\setminus\{S\}$ is a bouquet of halved cubes. Thus, it
		suffices to show that the degree of $S$ in $G$ is upper bounded by $\binom{d}{2}$. From Lemma 		
		\ref{bouquet-properties}(ii), we know that $|S| \le d$. This implies that $S$ is incident in $G$ to at most
		$\binom{d}{2}$ vertical edges. Therefore, it remains to
		bound the number of horizontal edges sharing $S$. The following lemma will be useful for this purpose:
		
		\begin{lemma} \label{bouquet-aux}
			If $|S|=d-k\le d$, then $S$ is incident in $G$ to at most $(d-k)k$ horizontal edges.
		\end{lemma}
		
		\begin{proof}
			Pick any $s\in S$ and set $Y:=S\setminus \{s\}$. For an element $e\in X\setminus S$, let $S^e_s := Y\cup
			\{e\}$. Notice that such $S^e_s$ are exactly the neighbors of $S$ in $\frac{1}{2}Q_m$ connected by
			a horizontal edge.  Let $X' = \{e\in X\setminus S: S^e_s\in \bouquet\}$.
			
			Pick any element $y \in Y$. Then $y \in S^e_s$ for any $e \in X'$. Since $\bouquet$ is a bouquet of
			halved cubes, each of the $d-k-1$ pairs $\{y,e'\}$ with $e' \in S\setminus\{y\}$ belongs to
			$\bouquet$ (yielding $P(y,S\setminus\{y\}) \subseteq \bouquet$). To each set $S^e_s, e\in X',$
			corresponds the unique pair $\{y,e\}$ and $\{y,e\} \in \bouquet$ because $y,e \in S^e_s$. Therefore
			$P(y,X')\subset\bouquet$. Since $|P(y,S\setminus \{ y\})|+|P(y,X')|\le d-1$ and
			$|P(y,S\setminus\{y\})| = d-k-1, |X'| = |P(y,X')|$, we conclude that $|X'| \le k$. Therefore, for a fixed
			element $s \in S$, $S$ has at most $k$ neighbors of the form $S^e_s$ with $e \in X'$. Since there are $|S| =
			d-k$ possible choices of the element $s$, $S$ has at most $(d-k)k$ neighbors of cardinality $|S|$.
		\end{proof}
		
		We now continue the proof of Proposition \ref{bouquet-density}.
		Let $|S| = d-k \le d$. Then $S$ has $\binom{d-k}{2}$ neighbors of the form
		$S\backslash\{e,e'\}$ with $e\neq e'\in S$, i.e., $S$ has $\binom{d-k}{2}$ incident vertical edges. It
		remains	to bound the number of neighbors of $S$ of the form $S\backslash\{e\}\cup\{e'\}$ with $e\in S$
		and $e'\in X\setminus S$. By Lemma \ref{bouquet-aux}, $S$ has at most $(d-k)k$ such neighbors.
		Summarizing, $S$ possesses $(d-k)k + \binom{d-k}{2} = 		
		\frac{1}{2}(d^2-d-k^2+k)$ neighbors in $G$, and this number is maximal for $k=0$ because 
		$$
			\frac{1}{2}(d^2- d - k^2 + k)
			=\frac{1}{2}(d^2 - d) - \frac{1}{2}(k^2-k)
			= \binom{d}{2} - \binom{k}{2} \le \binom{d}{2}.
		$$
		Hence, the degree of $S$ in $G$ is at most $\binom{d}{2}$, as asserted.
	\end{proof}
	
	\subsection{Proof of Theorem \ref{density}}
	First, let $\mcs$ be an even set family over $X$ with $|X|=m$, $d=\vccdim^*(\mcs)$ be the
	clique-VC-dimension of $\mcs$, and $\incgraph(\mcs)=(V,E)$ be the 1,2-inclusion graph of $\mcs$. We have
	to prove that $\frac{|E|}{|V|}\le \binom{d}{2}=:D.$
	
	Let $A$ be a set of $\mcs$ 	such that $\vccdim(\mcs\Delta A)=\vccdim^*(\mcs)=d$. By Lemma \ref{twisting},
	$\incgraph(\mcs\triangle A)\backsimeq \incgraph (\mcs)$. Thus it suffices to prove the inequality
	$\frac{|E(\incgraph(\mcs\triangle A))|}{|V(\incgraph(\mcs\triangle A))|}\le D.$ Consider a complete sequence of
	d-shiftings of $\mcs\triangle A$ and denote by $(\mcs\triangle A)^*$ the resulting set family. Since
	$\mcs\triangle A$ is a pointed even set family, applying Lemma \ref{lem_push-down_baisse_vccdim2} to each
	d-shifting, we deduce that $\vccdim((\mcs\triangle A)^*)\leq\vccdim(\mcs\triangle A)=d$.
	By Lemma \ref{punctured-bouquet}, $(\mcs\triangle A)^*$ is a bouquet of halved cubes, thus, by Proposition
	\ref{bouquet-density}, the degeneracy of its 1,2-inclusion graph $\Gbouquet=\incgraph((\mcs\triangle A)^*)$
	is at most $D$. Therefore, if $\Gbouquet=(\Vbouquet,\Ebouquet)$, then
	$\frac{|\Ebouquet|}{|\Vbouquet|}\le D$ (here we used the fact that the degeneracy of a graph $G=(V,E)$ is
	an upper bound for the ratio $\frac{|E|}{|V|}$). Applying Proposition \ref{lem_push-down_augmente_densite}
	to each of the d-shiftings and taking into account that $\incgraph(\mcs\triangle A)\backsimeq \incgraph
	(\mcs)$, we conclude that $\frac{|E|}{|V|}\le \frac{|\Ebouquet|}{|\Vbouquet|}$, yielding the required density
	inequality $\frac{|E|}{|V|}\le D$ and finishing the proof of Theorem \ref{density} in case of even set families.
	If $\mcs$ is an arbitrary set family, then $\vccdim^*(\mcs)=\vccdim^*(\mcs^+)$, where $\mcs^+$ is the lifting of
	$\mcs$ to an even set family. Since by Lemma \ref{odd-even},  $\mcs$ and $\mcs^+$ have isomorphic
	1,2-inclusion
	graphs, the density result for  $\mcs$ follows from the density result for $\mcs^+$. This concludes the proof
	of
	Theorem  \ref{density}.
	
	\begin{example}\label{example6}
		As in the case of classical VC-dimension and Theorem \ref{subgraphs_hypercubes},  the inequality from
		Theorem  \ref{density} between the density of 1,2-inclusion graph $G_{1,2}(\mcs)$ and the
		\vccdimt  of $\mcs$ is sharp in the following sense: there exist even set families $\mcs$ such that the
		degeneracy of $G_{1,2}(\mcs)$ equals to
		$\binom{d}{2}$.
		% (consequently, the density of $G_{1,2}(\mcs)$ equals $\frac{1}{2}\binom{d}{2}$).
		For	example, the sporadic clique $\mcs_2$ has clique VC-dimension 3 (see Examples \ref{example3} and
		remark~\ref{remark5}), degeneracy 3, and density $\frac{3}{2}$). Notice that $G_{1,2}(\mcs_2)$ is the
		halved cube $\frac{1}{2}Q_3$. More generally, let $\mcs_4$  be the even set family  consisting of
		all  even subsets of an $m$-set $X$. Clearly $d:=\vccdim^*(\mcs_4)=|X|=m$ and $\mcs_4$  induces the
		halved cube $\frac{1}{2}Q_m$. We assert that  $\frac{1}{2}Q_m$ has  degeneracy 	
		$\binom{d}{2}$. Indeed, every $S\in\mcs_4$ is incident to $\binom{|X|-|S|}{2}$ supersets of cardinality
		$|S|+2$, to $\binom{|S|}{2}$ subsets of cardinality $|S|-2$, and to $|S|(|X|-|S|)$ sets of cardinality $|S|$.
		Setting $s:=|S|$, we conclude that each set $S$ has degree
		$$
		\frac{(m-s)(m-s-1) + s(s-1)}{2} + s(m-s) = \frac{1}{2}(m^2-m) =\frac{1}{2}(d^2-d)=\binom{d}{2}.
		$$
		%This establishes that the upper bound in Theorem \ref{density} is sharp.
	\end{example}
	
	\begin{remark} In the following table, for pointed even set families $\mcs_0,\mcs_1,\ldots, \mcs_4$ defined in
		Examples \ref{example1}-\ref{example4} and \ref{example6}, we present their VC-dimension,
		the two clique VC-dimensions, the 2VC-dimension, the degeneracy, and the density.
		\begin{table}[!h]
			\renewcommand\arraystretch{1.3}
			\centering
			\begin{tabular}{|c|c|c|c|c|c|c|c|}
				\hline
				$\mcs$ &$\vcd$ &$\vccdim$ &$\vccdim^*$ & degeneracy &density &2\text{VC-dim}
				%&$\vcsdim$
				\\\hline
				$\mcs_0$	& $1$	& $-$ & $m$	& $m-1$& $\frac{m-1}{2}$ &$0$ %&$1$	
				\\\hline
				$\mcs_1$	& $1$	& $m$ & $m$	& $m-1$& $\frac{m-1}{2}$ &$2$ %&$m$	
				\\\hline
				%$\mcs_2$	& $1$	&$m$	&$m$	&$m-1$ &$\frac{m-1}{2}$ 	&$2$ %&$m$	
				%\\\hline
				$\mcs_2$	& $2$	&$3$	&$3$	&$3$ &$\frac{3}{2}$ 	&$3$ %&$3$	
				\\\hline
				$\mcs_3$	& $2$	&$k+2$	&$k+2$	&$k$ &$\frac{k}{2}+o(1)$ 	&$2$ %&$k+1$	
				\\\hline
				$\mcs_3\triangle X$	& $2$	&$(m-1)k+1$	&$k+2$	&$k$ &$\frac{k}{2}+o(1)$ 	&$2$ %&$1$	
				\\\hline
				$\mcs_4$	&$m-1$	&$m$	&$m$	&$\binom{m}{2}$	&$\frac{1}{2}\binom{m}{2}$
				&$m$ %&$m$
				\\\hline
			\end{tabular}
			\end{table}
	\end{remark}

	\section{Final discussion}
	In this note, we adapted the shifting techniques to prove that if $\mcs$ is an arbitrary set family and
	$\incgraph(\mcs)=(V,E)$ is the 1,2-inclusion graph of $\mcs$, then $\frac{|E|}{|V|}\le  \binom{d}{2}$,
	where $d:=\vccdim^*(\mcs)$ is the clique-VC-dimension of $\mcs$. The essential ingredients of our proof are
	Proposition \ref{lem_push-down_augmente_densite} (showing that d-shiftings preserve the number of
	vertices and do not decrease the number of edges), Lemma \ref{lem_push-down_baisse_vccdim2} (showing
	that d-shiftings do not increase the clique-VC-dimension), and Proposition \ref{bouquet-density} (bounding
	the density of bouquets of halved cubes, resulting from complete d-shiftings), all established for even set
	families. While Propositions
	\ref{lem_push-down_augmente_densite} and \ref{bouquet-density} are not very sensitive to the chosen
	definition of the clique-VC-dimension (but they require using the definition of 1,2-inclusion graphs as
	the subgraphs of the halved cube
	$\frac{1}{2}Q_m$), Lemma \ref{lem_push-down_baisse_vccdim2} strongly depends on how the
	clique-VC-dimension is defined. For example, this lemma does not hold for the notion of 2VC-dimension of
	\cite{BouTh} discussed in Section \ref{sect_original_method}. Notice also that, differently from the classical
	VC-dimension and similarly to our notion of clique-VC-dimension, 2VC-dimension is not invariant under
	twistings.
	
	In analogy to 2-shattering and 2VC-dimension, we can define the concepts of star-shattering and
	star-VC-dimension,
	which might be useful for finding sharper upper bounds (than those obtained in this
	paper) for density of 1,2-inclusion graphs. Let $Y\subset X$ and $e\notin Y$. We say that a set family $\mcs$
	{\it star-shatters} (or {\it s-shatters}) the pair $(e,Y)$ if for any $y\in Y$ there exists a set $S\in \mcs$ such
	that $S\cap (Y\cup\{ e\})=\{ e,y\}$. The {\it star-VC-dimension} of a pointed set family $\mcs$
	is
	$$
		\vcsdim(\mcs) := \max\{|Y|+1 : Y\subset X\text{ and }\exists e_i\in X\setminus Y\text{ such that } (e_i,Y)
	\text{ is s-shattered by }\mcs\}.
	$$
	%the maximum size plus one of the set $Y$ in an s-shattered pair $(e,Y)$.
	The difference with c-shattering is that, in the definition of s-shattering,
	%there is no surjective map $f: \pi_{Q(e,Y)}(\mcs)\to P(e,Y)$,
	a pair $(e,Y)$ is s-shattered if all $2$-sets of $P(e,Y)$ have non-empty
	fibers, i.e., if $P(e,Y) \subseteq \pi_{Q(e,Y)}(\mcs)$.
	Consequently, any s-shattered pair $(e,Y)$ is c-shattered,
	thus $\vcsdim(\mcs)\le \vccdim(\mcs)$. Since $\vcsdim(\mcs_3\triangle X)=3$ and $G_{1,2}(\mcs_3\triangle
	X)$
	contains a clique of size $k+1$, $\vcsdim(\mcs)$ cannot be used directly to bound the density of 1,2-inclusion
	graphs. We can adapt this notion by taking the maximum over all twistings with respect to sets of $\mcs$:
	the
	{\it star-VC-dimension} $\vcsdim^*(\mcs)$ of an arbitrary set family $\mcs$ is $\max \{ \vcsdim(\mcs\Delta A):
	A\in \mcs\}$\footnote{As noticed by one referee and O. Bousquet, in this form, the star-VC-dimension minus
	one coincides with the notion of {\it star number} that has been studied in the context of active learning
	\cite[Definition 2]{HaYa}.}. Even if $\vcsdim(\mcs)\le \vccdim(\mcs)$ holds for pointed families, as the following examples show,
    there are no relationships between $\vccdim^*(\mcs)$ and $\vcsdim^*(\mcs)$ for even families.

    \begin{example} Let $X=\{1,2,\ldots,2m-1,2m\}$, where $m$ is an arbitrary even integer,  and let
	$\mcs_5:=\{\varnothing \} \cup \{ \{1,2,\ldots,2i-1,2i\}: i=1,\ldots,m\}$. The nonempty sets of $\mcs_5$ can be viewed as
	intervals of even length of $\mathbb N$ with a common origin. The 1,2-inclusion graph of $\mcs_5$ is a
	path of length $m$. For any set $\{1,2,\ldots,2i\}$, the twisted family $\mcs_5^i :=\mcs_5\triangle\{1,2,\ldots,2i\}$ is the
	union of the set families $\mcs':=\{\varnothing, \{2i+1,2i+2\},\ldots,\{2i+1,2i+2,\ldots,2m\}\}$ and $\mcs'':=\{
	\{1,2,\ldots,2i-1,2i\},\ldots,\{2i-1,2i\} \}$. We assert that for any $i=1,\ldots,m$, we have $\vcsdim(\mcs_5^i)\leq3$ and
	$\vccdim(\mcs_5^i)=\max\{i, m-i\}+1$. Indeed, for any element $j\in X$, $\mcs_5^i$ cannot simultaneously s-shatter
	two pairs $\{j,l_1\}$, $\{j,l_2\}$ with $j < l_1 < l_2$ because every set of $\mcs_5^i$ containing $l_2$ also contains
	$l_1$. Analogously, $\mcs_5^i$ cannot s-shatter two pairs $\{j,l_1\}$ and $\{j,l_2\}$ with $l_2<l_1<j$.
	Consequently, if the pair $(j,Y)$ is s-shattered by $\mcs_5^i$, then $|Y|\leq 2$. This shows that
	$\vcsdim^*(\mcs_5)\leq 3$.

	 To see that $\vccdim(\mcs_5^i)=\max\{i, m-i\}+1$, notice that $\mcs'$ c-shatters
	the pair $(2i+1,Y')$ with $Y':=\{2i+2,2i+4,\ldots,2m\}$ and $\mcs''$ c-shatters the pair $(2i,Y'')$ with
   $Y'':=\{1,3,\ldots,2i-1\}$. Since the minimum over all $i=1,\ldots,m$ of $\max\{i, m-i\}+1$ is attained for
   $i=\frac{m}{2}$, we conclude that $\vccdim^*(\mcs_5)=\frac{m}{2}+1$. Therefore $\vcsdim^*(\mcs)$ can be arbitrarily
   smaller than $\vccdim^*(\mcs)$.
	\end{example}
	
    \begin{example}
	Let $X=X_1\dot{\cup} X_2$ with $X_1=\{e_1,\ldots,e_m\}$ and $X_2=\{x_1,\ldots,x_m\}$, and let
	$\mcs_6 := \{\varnothing, \{e_1,x_1\} \} \cup \{ \{e_1,e_i,x_1,x_i\} : 2\leq i\leq m \}$. The 1,2-inclusion graph
	of $\mcs_6$ is a star. One can easily see that $\vcsdim(\mcs_6) = m$. On the other hand, for the twisted family
	$\mcs'_6:=\mcs_6\triangle\{e_1,x_1\} = \{\varnothing\} \cup \{ \{e_i,x_i\} : 1\leq i\leq m \}$, one can check that
    $\vccdim(\mcs_6')= 2$, showing that $\vccdim^*(\mcs_6) = 2$ and $\vcsdim^*(\mcs_6) = m$. Therefore $\vcsdim^*(\mcs)$
    can be arbitrarily larger than $\vccdim^*(\mcs)$.
    \end{example}
	
	Therefore, it is natural to ask whether in Theorem  \ref{density} one can replace $\vccdim^*(\mcs)$ by
	$\vcsdim^*(\mcs)$.
    However, we were not able to decide the status of the following question:
	
	\begin{question} \label{question1}
		Is it true that for any (even) set family $\mcs$ with the 1,2-inclusion graph $G_{1,2}(\mcs)=(V,E)$ and
		star-VC-dimension $d=\vcsdim^*(\mcs)$, we have $\frac{|E|}{|V|}= O(d^2)$?
	\end{question}
	
	The main difficulty here is that a d-shifting may increase the star-VC-dimension, i.e., Lemma
	\ref{lem_push-down_baisse_vccdim2} does no longer hold. The difference between the s-shattering and
	c-shattering is that a $2$-set $\{e,y\}$ with $y\in Y$ can be s-shattered only by a set $S\in \mcs$ which
	belongs to the fiber $F(\{ e,y\})$ (the requirement $Y\cap S=\{ e,y\}$), while  $\{e,y\}$
	can be c-shattered by a set $S$ if $S$ just includes this set (the requirement $\{ e,y\}\subseteq S$). When
	performing a d-shifting $\phiij$ with respect to a pair $\{e_i,e_j\}$ such that $\{e_i,e_j\}\cap\{e,y\}=\varnothing$, a set $S\in
	\mcs$ can be mapped to a set $\phiij(S)$ belonging to the
	fiber $F(\{ e,y\})$. If $\phiij(S)$ is used to c-shatter the $2$-set $\{e,y\}$ by $\phiij(\mcs)$, then $S$
	can be used to shatter $\{e,y\}$ by $\mcs$ (the proof of Lemma \ref{lem_push-down_baisse_vccdim2}).
	However, this is no longer true for s-shattering, because initially $S$ may not necessarily belong to $F(\{
	e,y\})$.
	
	Also we have not found a counterexample to the following question (where the square of the
	clique-VC-dimension or of the star-VC-dimension is replaced by the product of the classical VC-dimension of
	$\mcs$ and the clique number of $G_{1,2}(\mcs)$):
	
	\begin{question} \label{question2}
		Is it true that for any set family $\mcs$ with 1,2-inclusion graph $G_{1,2}(\mcs)=(V,E)$, $d=\vcd(\mcs)$,
		and clique number $\omega=\omega(G_{1,2}(\mcs))$, we have $\frac{|E|}{|V|}=O(d\cdot\omega)$?
	\end{question}
	
	Hypercubes are subgraphs of Johnson graphs, therefore they are 1,2-inclusion graphs. This shows the
	necessity of both parameters (VC-dimension and clique number) in the formulation of
	Question~\ref{question2}. As above, the bottleneck in solving Question~\ref{question2} via shifting is that
	this operation may increase the clique number of 1,2-inclusion graphs.
	
	An alternative approach to Questions \ref{question1} and \ref{question2} is to adapt the original proof of
	Theorem \ref{subgraphs_hypercubes} given in \cite{HaLiWa}. In brief, for a set family $\mcs$ of VC-dimension
	$d$ and an element $e$, let $\mcs_e=\{ S'\subseteq X\setminus \{ e\}: S'=S\cap X \text{ for some } S\in \mcs\}$
	and $\mcs^e=\{ S'\subseteq X\setminus \{ e\}: S' \mbox{ and } S'\cup \{ e\} \mbox{ belong to } \mcs\}.$ Then
	$|\mcs|=|\mcs_e|+|\mcs^e|$, $\vcd(\mcs_e)\le d$, and $\vcd(\mcs^e)\le d-1$ hold. Denote by $G_e$ and $G^e$
	the 1-inclusion graphs of $\mcs_e$ and $\mcs^e$. Then $|E(G_e)|\le d|V(G_e)|=d|\mcs_e|$ and $E(G^e)\le
	(d-1)|V(G^e)|=(d-1)|\mcs^e|$ by induction hypothesis. The proof of the required density inequality follows by
	induction from the equality $|V(G)|=|\mcs|=|\mcs_e|+|\mcs^e|=|V(G_e)|+|V(G^e)|$ and the inequality
	$|E(G)|\leq |E(G_e)|+|E(G^e)|+|V(G^e)|$. Unfortunately, as was the case for shiftings, the clique number of
	$G_{1,2}(\mcs_e)$ may be strictly larger than the clique number of $G_{1,2}(\mcs)$. Also the inequality
	$|E(G)|\leq |E(G_e)|+|E(G^e)|+|V(G^e)|$ is no longer true in this form if instead of 1-inclusion graphs one
	consider 1,2-inclusion graphs.

\medskip\noindent
{\bf Acknowledgements.} We would like to acknowledge the anonymous referees for a careful reading of
the previous version and many useful remarks. We are especially indebted to one of the referees who found a
critical error in the previous proof of Proposition \ref{lem_push-down_augmente_densite}. We would like to
acknowledge another referee and Olivier Bousquet for pointing to us the paper \cite{HaYa}. This work was supported in part
by ANR project DISTANCIA (ANR-17-CE40-0015).
	
\bibliographystyle{amsalpha}

\end{document}